\newtheorem{proposition}{Proposition}
\newcolumntype{d}[1]{D{.}{.}{#1}}
\newcommand{\cA}{{\cal A}}
\renewcommand{\Re}{{\mathbb{R}}}
\newcommand{\cG}{{\cal G}}
\newcommand{\cV}{{\cal V}}
\newcommand{\cE}{{\cal E}}
\newcommand{\cH}{{\cal H}}
\newcommand{\cS}{{\cal S}}
\newcommand{\cR}{{\mathcal R}}
\newcolumntype{P}[1]{>{\centering}p{#1}}
\newcolumntype{M}[1]{>{\centering}m{#1}}
\newcommand{\blue}[1]{#1}
\def\sqw{\hbox{\rlap{\leavevmode\raise.3ex\hbox{$\sqcap$}}$%
\sqcup$}}
\def\sqb{\hbox{\hskip5pt\vrule width4pt height6pt depth1.5pt%
\hskip1pt}}
\def\qed{\ifmmode\hbox{\hfill\sqb}\else{\ifhmode\unskip\fi%
\nobreak\hfil
\penalty50\hskip1em\null\nobreak\hfil\sqb
\parfillskip=0pt\finalhyphendemerits=0\endgraf}\fi}
\def\cqfd{\ifmmode\sqw\else{\ifhmode\unskip\fi\nobreak\hfil
\penalty50\hskip1em\null\nobreak\hfil\sqw
\parfillskip=0pt\finalhyphendemerits=0\endgraf}\fi}
 \def\newblock{\ }%
\newcolumntype{M}[1]{>{\flushleft}m{#1}}
\title{Large neighborhoods with implicit customer selection for vehicle routing problems with profits}
\author{Thibaut Vidal, Nelson Maculan, Puca Huachi Vaz Penna, Luis Satoru Ochi}
\begin{document}

\begin{center}

\begin{LARGE}
Large neighborhoods with implicit customer selection for vehicle routing problems with profits
\end{LARGE}

\vspace*{0.65cm}

\textbf{Thibaut Vidal *} \\
LIDS, Massachusetts Institute of Technology \\
POM, University of Vienna, Vienna, Austria \\
vidalt@mit.edu \\
\vspace*{0.2cm}
\textbf{Nelson Maculan} \\
Universidade Federal do Rio de Janeiro (COPPE-IM-UFRJ), Brazil  \\
maculan@cos.ufrj.br \\
\vspace*{0.2cm}
\textbf{Luiz Satoru Ochi, Puca Huachi Vaz Penna} \\
Instituto de Computa\c{c}\~ao -- Universidade Federal Fluminense, Niter\'oi, Brazil \\
\{satoru,ppenna\}@ic.uff.br \\

\vspace*{0.7cm}

\begin{large}
Working Paper, MIT -- Revised, July 2014
\end{large}

\vspace*{0.4cm}

\end{center}
\noindent
\textbf{Abstract.}
We consider several Vehicle Routing Problems (VRP) with profits, which seek to select a subset of customers, each one being associated with a profit, and to design service itineraries. When the sum of profits is maximized under distance constraints, the problem is usually called team orienteering problem.
The capacitated profitable tour problem seeks to maximize profits minus travel costs under capacity constraints. Finally, in the VRP with private fleet and common carrier, some customers can be delegated to an external carrier subject to a cost. Three families of combined decisions must be taken: customers selection, assignment to vehicles, and sequencing of deliveries for each route.

We propose a new neighborhood search for these problems which explores an exponential number of solutions in pseudo polynomial time. \blue{The search is conducted with standard VRP neighborhoods on an \emph{exhaustive} solution representation, visiting all customers. Since visiting all customers is usually infeasible or sub-optimal,} an efficient \emph{Select} algorithm, based on resource constrained shortest paths, \blue{is repeatedly used on any new route to find the optimal subsequence of visits to customers.}
The good performance of these neighborhood structures is demonstrated by extensive computational experiments with a local search, an iterated local search and a hybrid genetic algorithm. Intriguingly, even a local-improvement method to the first local optimum of this neighborhood achieves an average gap of 0.09\% on classic team orienteering benchmark instances, rivaling with the current state-of-the-art metaheuristics. Promising research avenues on hybridizations with more standard routing neighborhoods are also open.
\vspace*{0.3cm}

\noindent
\textbf{Keywords.}  Vehicle routing, team orienteering, prize collecting, profits, local search, large neighborhoods, dynamic programming

\vspace*{0.5cm}

\noindent
* Corresponding author

\newpage

\thispagestyle{empty}
\pagenumbering{arabic}
\onehalfspacing


\section{Introduction}

Vehicle Routing Problems (VRP) with profits seek to select a subset of customers, each one being associated with a profit, and design up to $m$ vehicle itineraries, starting and ending at a central depot, to visit them. These problems have been the focus of extensive research, as illustrated by the surveys of \cite{Feillet2005}, \cite{Vansteenwegen2010} and \cite{Archetti2013}, mostly because of their difficulty and their numerous practical applications in production planning and logistics \citep{Hemmelmayr2008,Duhamel2009,Tricoire2010,Aras2011,Aksen2012}, manufacturing \citep{Lopez1998,Tang2006a}, robotics, humanitarian relief \citep{Campbell2008} and military reconnaissance \citep{Mufalli2012}, among others.

Three main settings are usually considered in the literature \citep{Chao1996,Archetti2008d,Chu2005,Bolduc2008}:  
profit maximization under distance constraints, called Team Orienteering Problem (TOP);
maximization of profit minus travel costs under capacity constraints, called Capacitated Profitable Tour Problem  (CPTP); and the so-called VRP with Private Fleet and Common Carrier (VRPPFCC), in which customers can be delegated to an external logistics provider, subject to a cost.

To address these problems, we propose to conduct the search on an \emph{exhaustive} solution representation which only specifies the assignment and sequencing of all customers to vehicles\blue{, without deciding which customers are selected in practice. For each route examined during the search}, a \textsc{Select} algorithm, based on a Resource Constrained Shortest Path (RCSP), performs the optimal selection of customers within this sequence and evaluates real route costs.
We then introduce a new Combined Local Search (CLS) working on this solution representation, exploring an exponential set of solutions of VRP with Profits (VRPP) obtained from one standard VRP move with an exponential number of possible combinations of implicit insertions and removals of customers, in pseudo polynomial time.
Pruning techniques are applied to reduce the number of arcs in the shortest-path from $O(n^2)$ to $O(H n)$, where $H \ll n$ is a sparsification parameter. Bi-directional dynamic programming and pre-processing methods are also proposed to solve efficiently the successive RCSPs issued from the local search. This allows to decrease further the amortized complexity of RCSP resolution from $O(B H n)$ down to $O(BH^2)$ for inter-route moves, and $O(B^2H^2)$ for intra-route moves, where $B$ is the average number of labels at each node. As demonstrated by our computational experiments, $B$ remains usually sufficiently small to allow for efficient computations.

The contributions of this work are the following.
1) A new large neighborhood is introduced for vehicle routing problems with profits.
2) Pruning and re-optimization techniques are proposed to perform an efficient search, enabling to decrease the move evaluation complexity by a quadratic factor.
3) These neighborhoods are tested within three heuristic frameworks, a local-improvement procedure, an iterated local search, and a hybrid genetic search.
4) The resulting methods address the three mentioned problems in a unified manner.
5) State-of-the-art results are produced for these settings.
6) Even the simplest local-improvement procedure built on this neighborhood demonstrates outstanding performances on extensively-studied TOP benchmark instances.

\section{Problem statement and unification}

Let $\cG=(\cV, \cE)$ be a complete undirected graph with $|\cV|=n+1$ nodes.
Node $v_0 \in \cV$ represents a depot, where a fleet of $m$ identical vehicles is based.
The other nodes $v_i \in \cV \backslash \{v_0\}$, for $i \in \{1, \ldots , n\}$, represent the customers, characterized by demand $q_i$ and profit $p_i$. Without loss of generality, $q_0 = p_0 = 0$. Edges $(i,j) \in \cE$ represent the possibility of traveling directly from a node $v_i \in \cV$ to a different node $v_j \in \cV$ for a distance/duration $d_{i j}$. In this complete graph, distances are assumed to satisfy the triangle inequality. \blue{Several previous works on TOP have also considered a distinct depot origin and depot destination. This can be modeled by considering $d_{0 i} \neq d_{i 0}$, and thus with an asymmetric~distance matrix.}

The objective of the TOP is to find up to $m$ vehicle routes $\sigma_k = (\sigma_k(1), \dots, \sigma_k(|\sigma_k|))$ for $k \in \{1,\dots,m\}$ starting and ending at the depot, such that the total collected prize $Z_\textsc{top}$ (Equation \ref{TOP1}) is maximized, the sum of traveled distance on any route $\sigma_k$ is smaller than $D$ (Equation \ref{TOP2}), and each customer is serviced at most once.
\begin{align}
Z_\textsc{top} = &\sum\limits_{k=1}^m \sum\limits_{i=1}^{|\sigma_k|-1} p_{\sigma_k(i)} \label{TOP1} \\
 & \sum\limits_{i=1}^{|\sigma_k|-1}  d_{\sigma_k(i)\sigma_k(i+1)} \leq D & k \in \{1,\dots,m\} \label{TOP2}
\end{align}

In the CPTP, \blue{the objective is to produce up to $m$ vehicle routes so as to} maximize the total profit minus travel distance (Equation \ref{CPTP1}), and any route $\sigma_k$ is subject to a capacity constraint (Equation \ref{CPTP2}).

\begin{align}
 Z_{\textsc{cptp}} = & \sum\limits_{k=1}^m \sum\limits_{i=1}^{|\sigma_k|-1} \left\{ p_{\sigma_k(i)} -  d_{\sigma_k(i)\sigma_k(i+1)} \right\} \label{CPTP1} \\
& \sum\limits_{i=1}^{|\sigma_k|-1} q_{\sigma_k(i)} \leq Q &  k \in \{1,\dots,m\} \label{CPTP2}
\end{align}

Finally, in the VRPPFCC, each customer $v_i$ is also associated with an outsourcing cost $o_i$, which is paid if the customer is not serviced. Reversely, this outsourcing cost can be viewed as a profit for customer service, leading to the maximization objective of Equation (\ref{VRPPFCC1}), in which we define the constant $O = \sum_{i=1,\dots,n} o_i$. Each route is also subject to a capacity constraint (Equation \ref{VRPPFCC2}).
\begin{align}
Z_\textsc{vrpfcc} = & \sum\limits_{k=1}^m \sum\limits_{i=1}^{|\sigma_k|-1} \left\{ o_{\sigma_k(i)}  - d_{\sigma_k(i)\sigma_k(i+1)}   \right\} - O  \label{VRPPFCC1} \\
& \sum\limits_{i=1}^{|\sigma_k|-1} q_{\sigma_k(i)} \leq Q &  k \in \{1,\dots,m\} \label{VRPPFCC2}
\end{align}

\begin{proposition}
\label{unification}
TOP, CPTP, and VRPPFCC are all special cases of a two-resources vehicle routing problem with profits (2-VRPP).
In this problem, any arc (i,j) is associated with a resource consumption $r_{ij} \in \Re^+$ and a profit $p_{ij} \in \Re$.
The objective is to build $m$ or less routes, to maximize the total profit (Equation \ref{general1}) while respecting resource constraints on all routes (Equation \ref{general2}).
\begin{align}
Z_{\textsc{2-VRPP}} & =  \sum\limits_{k=1}^m \sum\limits_{i=1}^{|\sigma_k|-1} p_{\sigma_k(i)\sigma_k(i+1)}  \label{general1} \\
& \sum\limits_{i=1}^{|\sigma_k|-1} r_{\sigma_k(i)\sigma_k(i+1)}  \leq R&  k \in \{1,\dots,m\} \label{general2}
\end{align}

\noindent
The reformulation is done as follows for each problem:

\begin{tabular}{l@{\hspace{1cm}}l@{\hspace{1cm}}l@{\hspace{1cm}}l}
\\
TOP: & $r_{ij} = d_{ij}$ & $R = D$ & $p_{ij} = p_i$ \\
CPTP: & $r_{ij} = \frac{q_i}{2}  + \frac{q_j}{2}  $ & $R = Q$ &  $p_{ij} = p_i  - d_{ij}$ \\
VRPPFCC: & $r_{ij} = \frac{q_i}{2}  + \frac{q_j}{2} $ &  $R = Q$ & $p_{ij} = o_i - d_{ij} $ \\
\\
\end{tabular}

\end{proposition}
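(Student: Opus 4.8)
The plan is to exploit the fact that all three problems share an identical feasible region — namely, collections of at most $m$ routes, each beginning and ending at the depot $v_0$ and visiting every customer at most once — which is precisely the feasible region of the 2-VRPP. Consequently, it suffices to check, for each of the three substitutions listed in the reformulation table, that the 2-VRPP objective (\ref{general1}) and resource constraint (\ref{general2}) reduce to, or are equivalent to, the problem-specific objective and constraint, route by route. The verification is entirely arithmetic, and the triangle inequality plays no role here.

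For the TOP I would substitute $p_{ij}=p_i$ and $r_{ij}=d_{ij}$ with $R=D$. The resource constraint then coincides term by term with (\ref{TOP2}). For the objective, assigning to each arc the profit of its tail node turns $\sum_i p_{\sigma_k(i)\sigma_k(i+1)}$ into $\sum_{i=1}^{|\sigma_k|-1} p_{\sigma_k(i)}$; since $\sigma_k(1)=v_0$ carries $p_0=0$ and every serviced customer appears exactly once as a tail, this recovers (\ref{TOP1}) exactly. The CPTP is even more direct on the objective: with $p_{ij}=p_i-d_{ij}$ the arc-profit sum equals (\ref{CPTP1}) term by term.

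The only step requiring a genuine, if small, argument is the capacity constraint for the CPTP and VRPPFCC, where $r_{ij}=q_i/2+q_j/2$. Here I would use a telescoping book-keeping argument: summing $q_{\sigma_k(i)}/2+q_{\sigma_k(i+1)}/2$ over the arcs of a route counts each interior (customer) node twice — once as a head and once as a tail — for a combined contribution of $q_i$, while each of the two depot endpoints is counted only once, contributing $q_0/2=0$. The total therefore equals the total customer demand on the route, matching (\ref{CPTP2}) and (\ref{VRPPFCC2}) with $R=Q$.

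Finally, for the VRPPFCC objective I would invoke the observation that an additive constant does not change the maximizer. With $p_{ij}=o_i-d_{ij}$, the arc-profit sum equals $\sum_k\sum_i\{o_{\sigma_k(i)}-d_{\sigma_k(i)\sigma_k(i+1)}\}$, which is exactly $Z_\textsc{vrpfcc}+O$; since $O=\sum_i o_i$ is fixed independently of the chosen routes, maximizing the 2-VRPP objective is equivalent to maximizing $Z_\textsc{vrpfcc}$. The main (modest) obstacle is thus getting the endpoint book-keeping right in the capacity reduction and noting this constant-offset equivalence for the VRPPFCC; everything else follows by immediate substitution.
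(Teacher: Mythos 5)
Your verification is correct. Note that the paper itself offers no proof of this proposition at all -- the reformulation table is simply asserted, and the only proof environment in that section belongs to the subsequent triangle-inequality proposition -- so there is no "paper approach" to compare against beyond the implicit claim that the substitutions work. Your write-up supplies exactly the two points that actually need checking and that the paper leaves silent: the double-counting argument showing that $\sum_{i} \bigl( q_{\sigma_k(i)}/2 + q_{\sigma_k(i+1)}/2 \bigr)$ telescopes to the route's total customer demand because interior nodes are counted once as head and once as tail while the depot endpoints contribute $q_0/2 = 0$, and the observation that the constant offset $O$ in the VRPPFCC objective does not affect the maximizer. The remaining substitutions (TOP and CPTP objectives, TOP resource constraint) are term-by-term identities as you say, and your remark that the triangle inequality plays no role here is accurate -- it is only needed for the separate proposition about the resource consumptions $r_{ij}$.
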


\begin{proposition}
The resource consumptions $r_{ij}$ satisfy the triangle inequality, i.e. $r_{ij} \leq r_{ik} + r_{kj}$ for any $k \in \{0,\dots,n\} - \{i,j\}$.
\end{proposition}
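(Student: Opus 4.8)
The plan is to prove the statement by a direct case analysis over the three reformulations tabulated in Proposition \ref{unification}, exploiting the fact that across all three problems the resource consumption $r_{ij}$ takes only two distinct functional forms. I would first dispatch the TOP case, then handle CPTP and VRPPFCC together since they share the same definition of $r_{ij}$.

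For TOP, where $r_{ij} = d_{ij}$, the claimed inequality $r_{ij} \leq r_{ik} + r_{kj}$ is literally the triangle inequality $d_{ij} \leq d_{ik} + d_{kj}$ on the distance matrix. This is assumed to hold in the problem statement (the graph $\cG$ is complete with distances satisfying the triangle inequality), so this case requires no computation.

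For CPTP and VRPPFCC, where $r_{ij} = \frac{q_i}{2} + \frac{q_j}{2}$, I would simply expand the right-hand side and simplify:
\begin{equation*}
r_{ik} + r_{kj} = \left( \frac{q_i}{2} + \frac{q_k}{2} \right) + \left( \frac{q_k}{2} + \frac{q_j}{2} \right) = \frac{q_i}{2} + \frac{q_j}{2} + q_k = r_{ij} + q_k.
\end{equation*}
Thus the desired inequality $r_{ij} \leq r_{ik} + r_{kj}$ collapses to the single condition $q_k \geq 0$. This holds because demands are nonnegative for all customers, and $q_0 = 0$ at the depot by the normalization fixed in the problem statement, so the intermediate node $v_k$ can only add (never subtract) resource consumption.

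Honestly, there is no real obstacle here: the statement is a routine verification, and the only nontrivial ingredient is identifying the one hypothesis each case relies upon — metric distances for TOP, and nonnegativity of demands for CPTP and VRPPFCC. The value of the proposition is not its difficulty but its role downstream, since the triangle inequality on $r_{ij}$ is what guarantees that inserting an extra visit never decreases resource usage, a monotonicity property the \textsc{Select} algorithm and the resource-constrained shortest-path formulation presumably rely on.
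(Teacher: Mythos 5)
Your proof is correct and follows essentially the same route as the paper: a case split in which the TOP case reduces to the assumed triangle inequality on $d_{ij}$, and the CPTP/VRPPFCC case reduces to the nonnegativity of $q_k$ after expanding $r_{ik}+r_{kj}=r_{ij}+q_k$. The only difference is cosmetic (you write the expansion as an equality plus a remainder $q_k$, while the paper chains two inequalities), so there is nothing further to add.
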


\begin{proof}
In the TOP, CPTP and VRPPFCC, distances $d_{ij}$ are assumed to satisfy the triangle inequality. In addition, the demand $q_i$ is non-negative for any $i$. Hence, for any $(i,j)$ and $k \in \{0,\dots,n\}-\{i,j\}$,
\begin{itemize}[nosep]
\item for the TOP, $r_{ij} = d_{ij} \leq d_{ik} + d_{kj} \leq r_{ik} + r_{kj}$, and
\item for the CPTP and VRPPFCC, $r_{ij} = \frac{q_i}{2} + \frac{q_j}{2}   \leq  \frac{q_i}{2}  +  q_k +  \frac{q_j}{2}  \leq  r_{ik} + r_{kj}$.
\end{itemize}
\end{proof}

\blue{
The previous properties enable to reduce all three considered problems to the 2-VRPP.
The proposed methodology covers this general case.}

\section{Related literature}

VRPPs have been the subject of a well-developed literature since the 1980s.
The TOP, CPTP and VRPPFCC are NP-hard, and the current exact methods \citep{Butt1999,Boussier2007,Archetti2013a} can solve some instances with up to 200 customers, but mostly when the number of visited customers in the optimal solution remains rather small (less than 50). Heuristics are currently the method of choice for larger problems.

Heuristics and metaheuristics for the TOP have been considerably studied in the past years, and thanks to the rapid availability of common benchmark instances \citep{Chao1996}, a wide range of metaheuristic frameworks have been tested and compared. Neighborhood-centered methods \citep{Vidal2012a} have been generally privileged over population-based search.
\cite{Tang2005} proposed a tabu search with adaptive memory, exploiting both feasible and infeasible solutions in the search process.
\cite{Archetti2006a} introduced a rich family of metaheuristics based on tabu or Variable Neighborhood Search (VNS). The impact of different jump, penalization strategies, and feasibility restoration methods was assessed.
\cite{Ke2008} developed ant-colony optimization (ACO) techniques, and studied four alternative solution-construction approaches: sequential, deterministic-concurrent, random-concurrent, and simultaneous.
\cite{Bouly2009} introduced a hybrid genetic algorithm (GA) based on giant-tour solution representation, which was hybridized later on with particle-swarm optimization (PSO) in \cite{Dang2011}.
\cite{Vansteenwegen2009b} proposed a guided local search, and a path relinking approach was presented in \cite{Souffriau2010}.
Finally, a multi-start simulated annealing method was introduced in \cite{Lin2013}.

The multi-vehicle version of the CPTP has been considered recently in \cite{Archetti2008d}.
The authors extended the tabu and VNS of \cite{Archetti2006a} for this capacity-constrained setting and introduced new benchmark instances.
In addition, the VRPPFCC has been first studied in \cite{Chu2005}, and solved by means of a savings-based constructive procedure and local-improvement.
\cite{Bolduc2007a} and \cite{Bolduc2008} introduced local-search procedures with advanced \textsc{4-opt*}, \textsc{2-opt*}, and \textsc{2-add-drop} movements, as well as perturbation techniques for diversification. \cite{Cote2009a} proposed an efficient tabu search heuristic based on similar moves, which also exploits penalized infeasible solutions during the search. 
Another tabu search, complemented by ejection chains, was described in \cite{Potvin2011}. The latter methods produces results of higher quality, but the ejection chains tend to increase the computational effort.
Finally, \cite{Stenger2012a,Stenger2013} developed an adaptive VNS with cyclic improvements\blue{, and also considered problem extensions with multiple depots and non-linear outsourcing costs.}

Several other VRP with profits have been addressed in the literature, notably with time windows \citep[][among others]{Vansteenwegen2009a,Labadie2010a,Labadie2012a,Lin2012} and in presence of multiple planning periods \citep{Tricoire2010,Zhang2012a}.
Similarly to a wide majority of VRP publications, recent research has been focused for the most part on finding more sophisticated metaheuristic strategies rather than improving the low-level neighborhood structures, which remain the same since many years.
The goal of our paper is to break with this trend by introducing a new family of large neighborhoods. These neighborhoods can be applied in any metaheuristic framework, in possible cooperation with other fast and simple neighborhood structures.

\section{New large neighborhoods for VRP with profits}


All the previously-mentioned efficient metaheuristics rely on local-search improvement procedures to achieve high quality solutions. Most common neighborhoods include separate moves for changing the selection of customers with \textsc{Insert}, \textsc{Remove} or \textsc{Replace} moves, and changing the assignment and sequencing of customer visits with \textsc{Swap}, \textsc{Relocate}, \textsc{2-opt}, \textsc{2-opt*} or \textsc{Cross}. We refer to \cite{Feillet2005} and \cite{Vidal2012a} for a description of these classic neighborhoods.
However, neighborhoods which consider separately the changes of selection and sequencing/assignment may overlook a wide range of simple solution improvements, such as moves on sequences such as \textsc{Swap} or \textsc{Relocate} with a combined \textsc{Insert} of a customer in the same route.

\subsection{Implicit customer selection}
\label{sub-implicit}

We introduce a new neighborhood of exponential size which can be searched in pseudo polynomial time. Two main concepts are exploited: an exhaustive solution representation, and an implicit selection of customers.

VRPPs involve three families of decisions: a \emph{selection} of customers to be visited, the \emph{assignment} of selected customers, and the \emph{sequencing} of customers for each vehicle. In an exhaustive solution, the assignment to vehicles and sequencing \emph{of all customers} are specified, without considering whether they are selected or not. This representation is identical to a complete VRP solution. Some routes may thus exceed the resource constraints, and some may not be profitable, e.g., when off-centered customers with small profits are included.

To retrieve a VRPP solution from an exhaustive solution, a \textsc{Select} algorithm based on a resource constrained shortest path is applied on each route.  \textsc{Select}  retrieves the optimal subsequence of customers, fulfilling the resource constraints, while maximizing the profits.
The goal of this methodology is to keep sequences of non-activated deliveries at promising positions in the solution. These deliveries can become implicitly activated by the \textsc{Select} procedure when a modification, e.g. local-search move such as \textsc{Relocate} or \textsc{swap} is operated \blue{on the exhaustive solution}.

For any route $\sigma$, the selection subproblem is formulated as a resource constrained shortest path in an acyclic directed graph $\cH=(\cV,\cA)$, where $\cV$ contains the $|\sigma|$ nodes visited by a route. Each arc $(i,j) \in \cA$ for $i < j$ is associated with a resource consumption $r_{\sigma(i)\sigma(j)}$ and a profit $p_{\sigma(i)\sigma(j)}$ (c.f. Proposition \ref{unification}).  Subproblems are illustrated in Figure~\ref{selectalgo} for a solution with two routes.

\begin{figure}[h]
\centering
\includegraphics[width=15.5cm]{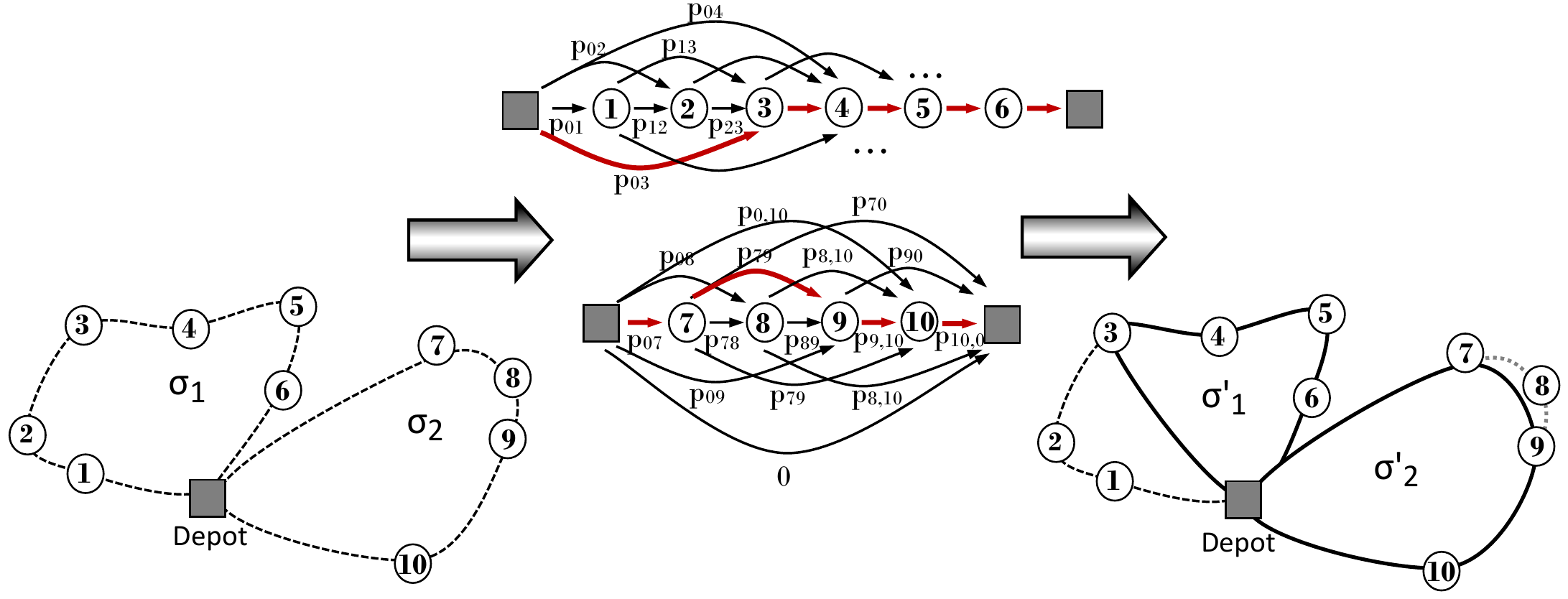}
\caption{From an exhaustive solution to a VRPP solution}
\label{selectalgo}
\end{figure}

Each resource constrained shortest path problem is solved by dynamic programming. A label $s = (s^\textsc{r},s^\textsc{p})$ is defined as a couple (resource,profit). To each node $\sigma(i)$, for $i \in \{1,\dots,|\sigma|\}$ a set of labels $\cS_i$ is associated, starting with $\cS_1 = \{(0,0)\}$ for the node representing the depot. Then, for any $i$, a set of labels $\cS'_{i+1}$ is constructed by considering iteratively any edge $(j,i+1) \in \cA$ and extending all labels of $j$ as in Equation (\ref{extension}).
\begin{equation}
 \label{extension}
\cS'_{i+1} = \bigcup_{j | (j,i+1) \in \cA}  \bigcup_{s_j \in \cS_{j}} (s_j^\textsc{r} + r_{\sigma(j)\sigma(i+1)}, s_j^\textsc{p} + p_{\sigma(j)\sigma(i+1)}) 
\end{equation}

Any infeasible label $s \in \cS'_{i+1}$, such that $s^\textsc{r} + r_{\sigma(i+1)\sigma(0)} > R$ is pruned from $\cS'_{i+1}$. Indeed, resource consumptions on edges satisfy the triangle inequality, and thus the resource consumption for returning to $v_0$ after $\sigma(i+1)$ on any path is greater or equal than $r_{\sigma(i+1)\sigma(0)}$.
All dominated labels of $\cS'_{i+1}$, i.e. labels $(s^\textsc{r},s^\textsc{p}) \in \cS'_{i+1}$ such that there exists $(\bar{s}^\textsc{r},\bar{s}^\textsc{p}) \in \cS'_{i+1}$ with $s^\textsc{r} \geq \bar{s}^\textsc{r}$ and $s^\textsc{p} < \bar{s}^\textsc{p}$ are also removed to yield $\cS_{i+1}$. \\

The resulting \textsc{Select} algorithm is pseudo polynomial, with a complexity of $O(n^2 B)$, where $B$ is an upper bound on the number of labels per node. 


\subsection{Neighborhood search on the exhaustive solution}

\blue{We now consider a combined local search procedure which applies classic VRP moves on the exhaustive solution representation. Evaluating the profitability of any move requires to use the \textsc{Select} algorithm on the newly created routes to find the optimal selection of customers. As such, insertions and removals of customers are implicitly managed within the selection process, instead of being explicitly considered by the local search.}

\blue{
In the proposed heuristics, we consider the standard VRP moves \textsc{2-opt}, \textsc{2-opt*}, \textsc{Relocate}, \textsc{Swap} and \textsc{Cross} of a maximum of two customers \citep{Vidal2012a}. These moves are tested between pairs of nodes $(v_i,v_j)$ such that $v_j$ is among the $\Gamma$ closest nodes of $v_i$ (similarly to \citealt{Johnson1997} and \citealt{Toth2003}). Moves 
are enumerated in a random order, any improvement being directly applied. The method stops whenever all moves have been tried without success.}

\subsection{Illustrative example}

\begin{figure}[htb]
\begin{minipage}[c]{.2\linewidth}
\begin{scriptsize}
\scalebox{0.9}
{
\begin{tabular}{|l@{\hspace{0.1cm}}c@{\hspace{0.2cm}}c@{\hspace{0.2cm}}c|}
\hline 
i&$r_{0,i}$&$r_{i-1,i}$&$p_{ij} \ \forall j$\\
\hline 
1&15&--&10\\
2&25&30&15\\
3&15&20&15\\
4&15&20&10\\
5&20&25&12\\
6&15&10&15\\
7&20&15&15\\
8&25&15&12\\
9&25&20&15\\
10&15&35&15\\
\hline 
\multicolumn{4}{|c|}{$R = 100$} \\
\multicolumn{4}{|c|}{$r_{7,9} = 25$, $r_{10,0} = 15$ } \\
\multicolumn{4}{|c|}{all other consumptions = $+\infty$} \\
\hline 
\multicolumn{4}{c}{\strut} \\
\hline 
$\sigma$&\multicolumn{2}{c}{$R(\sigma)$}&$P(\sigma)$\\
\hline 
(3,4,5,6)&\multicolumn{2}{c}{85}&52\\
(7,9,10)&\multicolumn{2}{c}{95}&45\\
(1,2,3,4)&\multicolumn{2}{c}{100}&50\\
(6,7,8,9)&\multicolumn{2}{c}{90}&57\\
\hline 
\end{tabular} 
}
\end{scriptsize}
\end{minipage} \hfill
\begin{minipage}[c]{.76\linewidth}
\centering
\includegraphics[width=11cm]{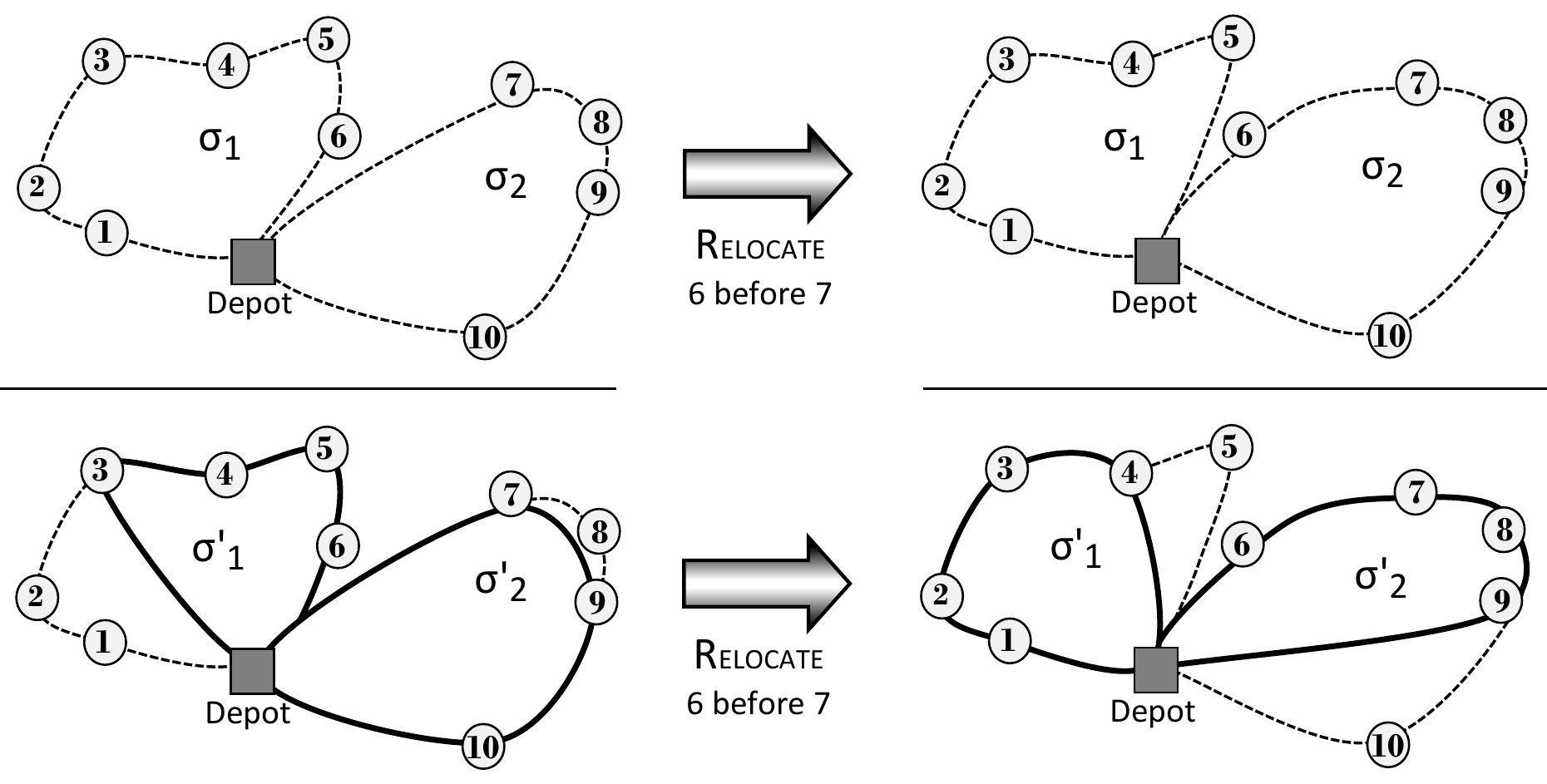}
\end{minipage}
\caption{\textsc{Relocate} move on the exhaustive solution representation, and impact on the associated VRPP solution}
\label{CompoundMove}
\end{figure}


Figure \ref{CompoundMove} illustrates a simple \textsc{Relocate} move of customer $v_6$ before $v_7$ on the exhaustive solution (top of the figure), and its impact on the associated VRPP solution (bottom of the figure).
The initial VRPP solution, on the left, was a local-optimum of all classic neighborhoods. The \textsc{Relocate} move has a dramatic impact on the associated VRPP solution, since the \textsc{Select} algorithm operates different choices as a consequence, here a compound \textsc{Removal} of $v_5$ and $v_{10}$ and \textsc{Insert} of $v_1$ and $v_2$ before $v_3$. As a result, a feasible solution with higher profit is attained.

\subsection{Connections with other large neighborhoods}
\label{aux-obj}

\blue{The proposed methodology is distinct from the \emph{Split} algorithm \citep{Beasley1983,Prins2004}, which aims to find the best segmentation of a permutation of customers into separate routes by inserting visits to the depot.}
In particular, we applied these neighborhoods within a Unified Hybrid Genetic Search with giant-tour ``exhaustive'' solution representation (Section \ref{exp-description-meta}). In this context, the Split algorithm can be implemented as a by-product of route evaluation procedures \citep{Vidal2012b}, and it assumes the twofold task of inserting depot visits and selecting customers.

These concepts are also clearly distinct from efficient ejection chains \citep{Glover1996} and cyclic improvement procedures based on the search for a positive-cost cycle in an auxiliary \emph{improvement graph}. In our case, multiple compound \textsc{Insert} or \textsc{Remove} can implicitly arise from a move. The two methodologies could even be combined together by considering ejection chains on the exhaustive representation, and thus, searching a positive cycle in an auxiliary graph in which each arc cost has been obtained by means of the \textsc{Select} algorithm. This is left as a perspective of research.

\subsection{Hierarchical objective}
\label{aux-obj}

The total profit is usually not straightforward to improve with local and even large-neighborhood search, since it requires the ability to \emph{create space} in order to save resources and deliver more profitable customers. An improvement of the objective is thus often the result of several combined moves.
For the TOP in particular, the search space has a \emph{staircase} aspect since multiple solutions have equal profit. This is usually not well-suited for an efficient search.
To avoid this drawback and drive non-activated customers towards promising locations, the total distance of the current exhaustive solution is considered as a secondary objective (Equation \ref{second}) with very small weight $\omega \ll 1$. 

\begin{equation}
\label{second}
Z' = Z_{\textsc{2-vrpp}}(\sigma) + \omega \sum_{\sigma \in \cR}  \sum\limits_{i \in \{1, \dots, |\sigma|-1\}} d_{\sigma(i)\sigma(i+1)}
\end{equation}

As a result, even if no improving move for the primary objective of the VRPP can be found from an incumbent solution, the neighborhood search will re-arrange the deliveries in better positions. This may open the way to new improvements of the main objective at a later stage, without requiring any solution deterioration.

\subsection{Speed-up techniques}
\label{speed-ups}

Solving from scratch each such resource constrained shortest path leads to computationally expensive move evaluations in $O(n^2 B)$. Such near-quadratic complexity may not be acceptable in recent neighborhood-based heuristic searches which rely on a considerable number of route evaluations.

\subsubsection{Graph sparsification} 
\label{sparsification}

To reduce this complexity, we propose to prune several arcs in the shortest-path graph, and rely on pre-processing and bi-directional dynamic programming. For a given \emph{sparsification parameter} $H \in \mathbb{N^+}$, only arcs $(i,j)$, with $(i < j)$ satisfying Equation (\ref{pruning})~are~kept.
\begin{equation}
\label{pruning}
j < i + H \text{ or } i=0 \text{ or } j=|\sigma|
\end{equation}

\blue{$H$ imposes an upper bound on the number of consecutive non-activated deliveries which can arise in a VRPP solution.}
The number of non-activated deliveries located just after or before a depot still remains unlimited, thus guaranteeing the existence of a feasible solution. This sparsification parameter is illustrated on Figure \ref{speedup1}.

\begin{figure}[h]
\centering
\includegraphics[width=12cm]{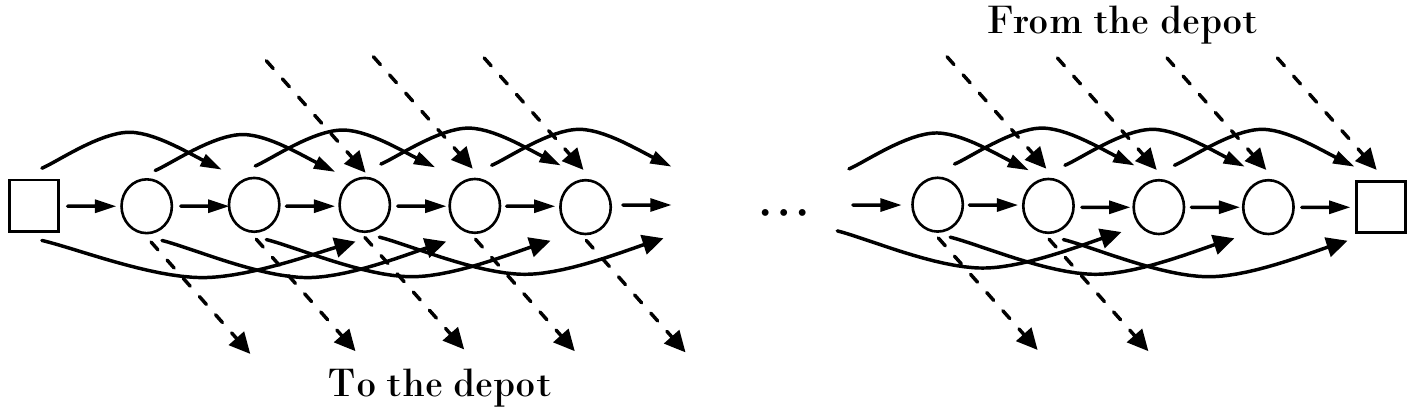}
\caption{Shortest path graph after sparsification (H=3)}
\label{speedup1}
\end{figure}

\blue{
\begin{proposition}
After sparsification, the number of arcs $|\cA'|$ in the new auxiliary graph $\cH' = (\cV,\cA')$ becomes $O(n+n+Hn) = O(nH)$. The first two terms are the arcs originating and ending at the depot, and the term $Hn$ relates to the limited number of intermediate arcs. The complexity of \textsc{Select} becomes $O(n H B)$.
\end{proposition}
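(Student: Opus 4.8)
The plan is to prove this by a direct counting argument that mirrors the structure of the sparsification rule itself. The three summands $n+n+Hn$ in the statement correspond exactly to the three clauses of the disjunction in Equation (\ref{pruning}), so I would partition the retained arc set $\cA'$ accordingly and bound each part separately; since we only need an upper bound, any arc satisfying several clauses may be counted more than once without harm. The only structural fact I need beyond the definition is that a single route visits at most $n+1$ distinct nodes, so $|\sigma| = O(n)$, and that the selection graph has the nodes of $\sigma$ as its vertex set.

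First I would count the depot-origin arcs, those with $i=0$: there is at most one such arc $(0,j)$ for each destination $j \in \{1,\dots,|\sigma|\}$, giving at most $|\sigma| = O(n)$ arcs. Symmetrically, the destination arcs, those with $j=|\sigma|$, number at most one per origin $i \in \{0,\dots,|\sigma|-1\}$, again $O(n)$. Finally I would count the intermediate short arcs, namely those with $1 \le i < j \le |\sigma|-1$ and $j < i+H$: for a fixed origin $i$, the admissible destinations $j$ lie in $\{i+1,\dots,i+H-1\}$, so there are at most $H-1$ of them, and summing over the at most $|\sigma|$ possible origins yields at most $(H-1)\,|\sigma| = O(Hn)$ such arcs. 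Adding the three bounds gives $|\cA'| = O(n) + O(n) + O(Hn) = O(Hn)$.

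For the complexity claim I would revisit the dynamic program of Section \ref{sub-implicit}. The cost of \textsc{Select} is governed by the label extensions of Equation (\ref{extension}): each arc $(j,i+1) \in \cA'$ is processed exactly once and triggers the extension of the at most $B$ labels stored in $\cS_j$, after which infeasible and dominated labels are discarded. Hence the total work is proportional to $|\cA'| \cdot B$, which is $O(HnB)$ by the arc bound just established, replacing the former $O(n^2 B)$.

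I do not expect a genuine obstacle here; the argument is elementary counting. The one point I would be careful about is the dominance-filtering step, so that it does not silently inflate the per-arc cost: keeping each label set $\cS_i$ sorted by resource value lets the feasibility pruning and the dominance removal be carried out in amortized $O(1)$ time per extended label, which is what makes the clean $O(HnB)$ bound hold rather than a bound with an extra logarithmic or linear factor in $B$.
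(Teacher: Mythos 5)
Your proposal is correct and follows essentially the same route as the paper: the proposition's own phrasing already decomposes $|\cA'|$ into the $O(n)$ depot-origin arcs, the $O(n)$ depot-destination arcs, and the $O(Hn)$ intermediate arcs with $j<i+H$, and obtains the $O(nHB)$ bound by charging $O(B)$ label extensions to each retained arc in the forward labeling. Your additional remark about keeping label sets sorted so that feasibility pruning and dominance filtering cost amortized $O(1)$ per extended label is a sensible implementation detail that the paper leaves implicit, but it does not change the argument.
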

}


\subsubsection{Labels pre-processing and route evaluations by concatenation} 

Solutions resulting from classic VRP moves on an incumbent exhaustive solution can all be assimilated to recombinations of a bounded number of subsequences of consecutive visits from this solution. This property is thoroughly discussed in \cite{Vidal2011b,Vidal2012b}. As a consequence, we propose advanced move evaluation methods which pre-process additional information on subsequences of consecutive customers from the incumbent solution to reduce the complexity of move evaluations.

There are $O(n^2)$ subsequences of consecutive customers in the incumbent solution. Pre-processing is done on these sequences, and the values are updated through the search whenever some routes of the incumbent solution are modified. As will be shown in the following, the effort related to information preprocessing is compensated by the reduction in computational effort related to move evaluations.
In this section, we explain how this \blue{concept} can be exploited to reduce even further the complexity of the \textsc{Select} algorithm for VRPPs.

\blue{For ease of presentation, we will call \emph{``starting node of $\sigma$''} any node among the $H$ first nodes of $\sigma$, and \emph{``finish node of $\sigma$''} any node among the H last nodes of $\sigma$.}
As in previous papers, we explain 1) the nature of the information, 2) how to do the preprocessing and 3) how to create an advanced \textsc{Select} algorithm which evaluates moves as a concatenation of known subsequences, using the information of each subsequence. \\

\textbf{Information collected and preprocessing.}  For any subsequence $\sigma$ of \blue{consecutive} nodes from the incumbent solution, the following information is pre-processed:
\begin{itemize}
\item Set of non-dominated labels $S_{ij}(\sigma)$ \blue{obtained from a resource constrained shortest path from a starting node $i$ to a finish node $j$ of $\sigma$. This information is computed for $i \in \{1,\dots,\min(H,|\sigma|)\}$ and $j \in \{\max(|\sigma|-H+1,1),\dots,|\sigma|\}$.}
\item Set of non-dominated labels $S^{\textsc{end}}_{i}(\sigma)$  \blue{obtained from a resource constrained shortest path in $\sigma$ from a starting node to the ending depot, for $i \in \{1,\dots,\min(H,|\sigma|)\}$.}
\item  Set of non-dominated labels $S^{\textsc{beg}}_{j}(\sigma)$  \blue{obtained from a resource constrained shortest path in $\sigma$ from the starting depot to a finish node $j$, for $j \in \{\max(|\sigma|-H+1,1),\dots,|\sigma|\}$.}
\item  \blue{Maximum} profit $P(\sigma)$ of a feasible path within $\sigma$, which starts from the depot, visits a subset of customers in $\sigma$, and comes back to the depot.
\end{itemize}

 \blue{Preprocessing these values for all sequences $\sigma$ requires $O(n^2 H B)$ elementary operations. Indeed, the non-dominated labels associated to the all-pairs resource constrained shortest path in graph $\cH'$, between any node $i$ and $j$ (including the depot nodes), provide the required information. These labels are computed by forward labeling in $O(n \cdot n H B)$.}

\begin{figure}[h]
\centering
\includegraphics[width=14.5cm]{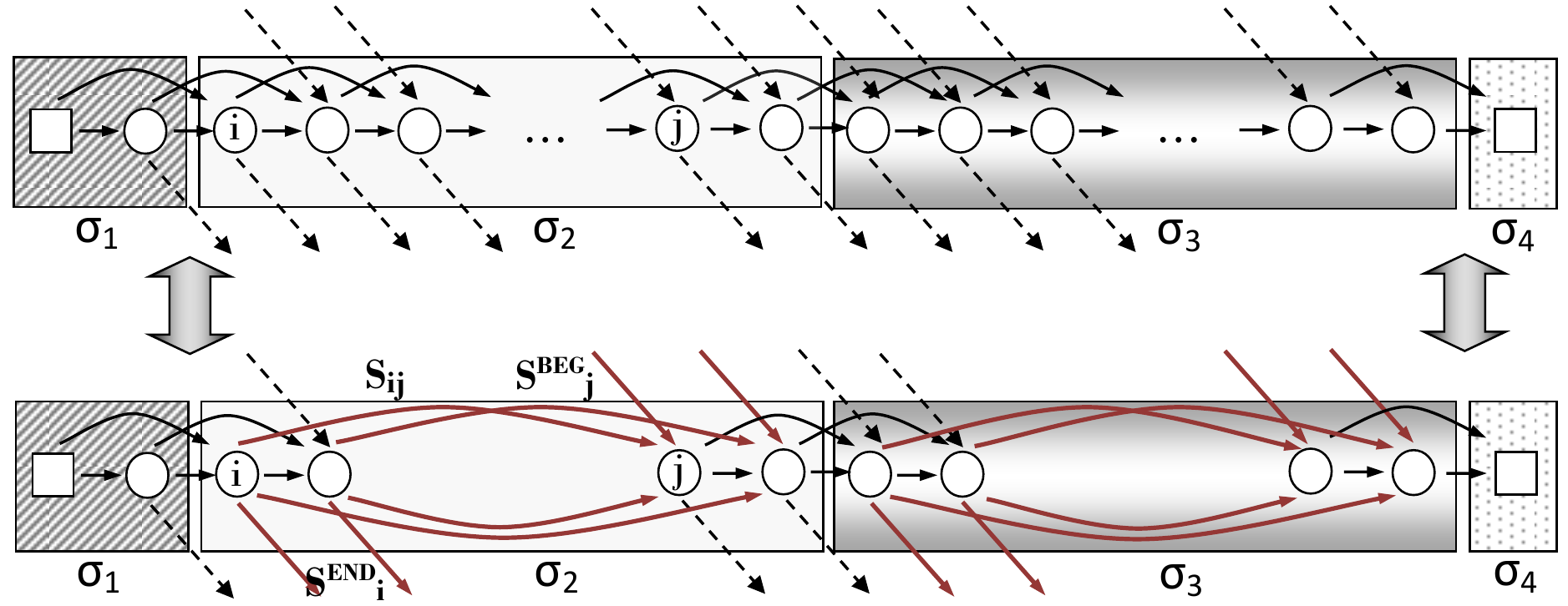}
\caption{Using pre-processed information to reduce the graph (H=2)}
\label{speedup2}
\end{figure}

\blue{Now, the contribution of this preprocessed information is illustrated in Figure~\ref{speedup2}.
The top of the figure illustrates a shortest path problem in graph $\cH'$, for a new route involving four known subsequences. The edges in dotted lines are the ones originating or ending at the depot nodes. Here $H=2$, and thus it is possible to skip only one node at a time.}

\blue{Remark that only arcs originating from the depot or finish nodes are connected to starting nodes in the next subsequence. Thus, any optimal path in the graph $\cH'$ visiting customers from more than two distinct sequences $\{\sigma_x,\dots,\sigma_y\}$ must visit at least one starting node or finish node in each of these subsequences $\sigma_k \in \{\sigma_x,\dots,\sigma_y\}$. As such, the partial path within a subsequence $\sigma_k$, either linking a starting node $i$ to a finish node $j$, or a starting node $i$ to the depot, or the depot to a finish node $j$, is represented by a label in $S_{ij}(\sigma_k)$, $S^{\textsc{end}}_{i}(\sigma_k)$ or $S^{\textsc{beg}}_{j}(\sigma_k)$, respectively. These pre-processed labels can be introduced as new arcs in the graph, and they replace previous intermediate arcs and nodes within the sequence. This leads to a new shortest path problem in a graph $\cG''$ illustrated in the bottom of the figure.}

The new \emph{reduced} graph $\cG'' = (\cV'', \cA'')$ is such that $|\cA''| = O(M H^2)$ arcs and $|\cV''| =  O(M H)$ nodes. $M$ is a bounded value which stands for the number of subsequences. Classic intra-route moves for the VRP are such that $M \leq 3$, and all inter-route moves are such that $M \leq 5$ \citep{Vidal2011b,Vidal2012b}. The quantity of arcs $O(H^2)$ does not depend anymore on the number of customers $n$. Still, the newly-created arcs, for any sequence $\sigma$, are not associated with a single profit and resource consumption, but rather to a set of non-dominated (resource, profit) couples $(s^\textsc{r},s^\textsc{p}) \in S_{ij}(\sigma)$, such that $|S_{ij}(\sigma)| = O(B)$. 

\begin{proposition}[\textbf{Concatenation -- general}]
\label{prop-concatN}
\label{concat-general}
The optimal profit $P(\sigma_1 \oplus \dots \oplus  \sigma_M)$ of \textsc{Select}, for a route \blue{made of} $M$ concatenated sequences $\sigma_1 \oplus \dots \oplus  \sigma_M$, is the maximum of the profit $\bar{P}(\sigma_1 \oplus \dots \oplus  \sigma_M)$ of the resource constrained shortest path in $\cG''$, and the maximum profit $P(\sigma_i) $ of a feasible path within $\sigma_i$ exclusively, for $i \in \{1,\dots,M\}$:
\begin{equation}
P(\sigma_1 \oplus \dots \oplus  \sigma_M) = \max \{ \bar{P}(\sigma_1 \oplus \dots \oplus  \sigma_M),  \max\limits_{i \in \{1,\dots,M\}} P(\sigma_i) \}
\end{equation}
$\bar{P}(\sigma_1 \oplus \dots \oplus  \sigma_M)$ can be evaluated in $\Phi_{\textsc{C-M}} =  O(M H^2 B^2)$.
\end{proposition}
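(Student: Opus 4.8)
The plan is to read $P(\sigma_1 \oplus \dots \oplus \sigma_M)$ as the optimal value of the resource constrained shortest path that \textsc{Select} solves in the sparsified graph $\cH'=(\cV,\cA')$ of the concatenated route, and to prove the identity by the two inequalities "$\geq$" and "$\leq$". The organizing idea is to partition the feasible depot-to-depot paths of $\cH'$ into those that visit customers of a \emph{single} subsequence and those that visit customers of \emph{two or more} subsequences, and to show that the optimum over the first class is $\max_i P(\sigma_i)$ while the optimum over the second is $\bar P$.

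The direction "$\geq$" I expect to be routine: every label stored in $S_{ij}(\sigma_k)$, $S^{\textsc{beg}}_j(\sigma_k)$ and $S^{\textsc{end}}_i(\sigma_k)$ is realised by a genuine feasible partial path within $\sigma_k$, and the inter-subsequence arcs of $\cG''$ are arcs of $\cH'$; hence any path of $\cG''$ and any within-$\sigma_i$ path lift to feasible paths of $\cH'$, giving $\max\{\bar P,\max_i P(\sigma_i)\}\le P(\sigma_1\oplus\dots\oplus\sigma_M)$. For the single-subsequence part of "$\leq$", I would observe that a path confined to $\sigma_i$ enters $\sigma_i$ from the start depot, stays inside, and leaves to the end depot; since arc costs depend only on their endpoints and the depot is shared, this is exactly one of the paths optimised by the within-route \textsc{Select} defining $P(\sigma_i)$, so its profit is at most $P(\sigma_i)$.

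The heart of "$\leq$" is the multi-subsequence case, which I would attack in two moves. First I would prove a boundary-crossing lemma directly from the pruning rule $j<i+H$: if an intra-route arc has tail $a$ in $\sigma_k$ (position $\le \mathrm{end}_k$) and head $b$ in a later subsequence (position $\ge \mathrm{end}_k+1$), then $b-a\le H-1$ forces $a\ge \mathrm{end}_k-H+2$ and $b\le \mathrm{start}_{k'}+H-2$, so the tail is a finish node and the head is a starting node. Consequently an optimal multi-subsequence path leaves its first visited subsequence at a finish node, crosses each intermediate subsequence from a starting node to a finish node, and enters its last visited subsequence at a starting node; these partial paths are represented respectively by a label of $S^{\textsc{beg}}_j(\cdot)$, of $S_{ij}(\cdot)$, and of $S^{\textsc{end}}_i(\cdot)$. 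Second, invoking the standard dominance-preservation fact — if $s$ is dominated by $\bar s$ then every extension of $s$ is dominated by the matching extension of $\bar s$, because resources add and profit is maximised — I may assume each such partial path is \emph{non-dominated}, hence lies in the pre-processed sets. This exhibits the optimal multi-subsequence path as a feasible path of $\cG''$, so its profit is at most $\bar P$, and taking the maximum over the two classes completes "$\leq$". For the complexity of $\bar P$, I would use that $\cG''$ is acyclic with $O(MH)$ nodes and $O(MH^2)$ arcs, each arc carrying $O(B)$ non-dominated labels and each node holding $O(B)$ of them; processing nodes in topological order and extending the $O(B)$ labels of a node across an $O(B)$-label arc produces $O(B^2)$ candidates before re-applying dominance, so amortising over all arcs yields $\Phi_{\textsc{C-M}}=O(MH^2B^2)$.

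The step I expect to be the main obstacle is the multi-subsequence case, where one must argue simultaneously that the boundary-crossing lemma forces every visited subsequence to be traversed between a starting node and a finish node (so that a pre-processed label exists) and that restricting to non-dominated labels loses no optimal solution. The single-subsequence paths, in which the depot may enter and leave at arbitrary interior nodes that are neither starting nor finish nodes, are precisely what escapes $\cG''$; making this explicit is what justifies the separate $\max_i P(\sigma_i)$ term and prevents the proof from silently undercounting.
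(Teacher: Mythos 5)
Your proof is correct and follows essentially the same route as the paper's own (much terser) argument: the same two-class decomposition into paths captured by the reduced graph $\cG''$ versus paths confined to interior nodes of a single subsequence and hence covered by the preprocessed $P(\sigma_i)$, and the same forward-labeling count of $O(B^2)$ candidates per arc of $\cG''$ for the complexity. Your explicit boundary-crossing lemma and the dominance-preservation step are just rigorous expansions of what the paper states as a remark, so no further comparison is needed.
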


Indeed, solving the shortest path on the reduced graph enables to find the best solution containing at least one \blue{starting or finish} node. A better solution can exist, for each sequence $\sigma_i$, by visiting exclusively some inside nodes which have been eliminated in the reduced graph. The cost of these specific solutions is included in the preprocessed value $P(\sigma_i)$.
Now, the reduced graph $\cG''$ has $O(M H^2)$ arcs, and each arc is associated with up to $B$ non-dominated (resource, profit) couples. Solving the resource constrained shortest path on this graph to find $\bar{P}(\sigma_1 \oplus \dots \oplus  \sigma_M)$ is equivalent to solving a resource constrained shortest path on a multi-graph, where arcs are duplicated $B$ times to take into account the different (resource,profit) combinations. This can be done as previously by means of a forward labeling method in $O(M H^2 B^2)$. \\

\blue{This algorithm derived from Proposition \ref{concat-general} is general in the sense that it can be used to evaluate a route made of any number of concatenated subsequences. Yet,} it is used in our method only to evaluate moves which involve more than three subsequences, such as intra-route \textsc{2-opt},  \textsc{Swap} and \textsc{Relocate} moves between two positions in the same route. The other -- more numerous -- inter-route \textsc{Swap}, \textsc{Relocate}, \textsc{Cross} and \textsc{2-opt*}  moves involve less than three subsequences \citep{Vidal2012a} as illustrated in Figure~\ref{speedup4}.
Proposition \ref{prop-concat3} provides a way to evaluate these moves even more efficiently by means of \blue{bi-directional} dynamic programming.

\begin{figure}[h]
\centering
\includegraphics[width=14cm]{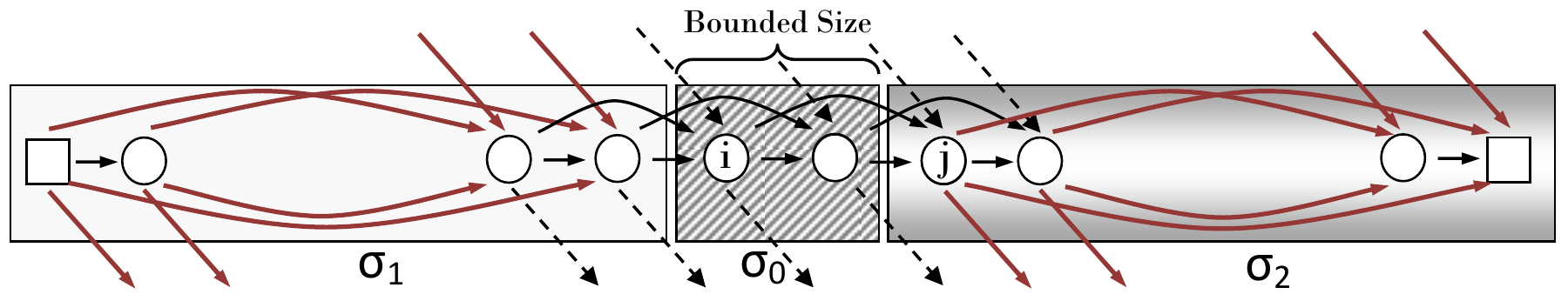}
\caption{Route resulting from a concatenation of three subsequences}
\label{speedup4}
\end{figure}

\begin{proposition}[\textbf{Concatenation -- 3 subsequences}]
\label{prop-concat3}
The optimal cost $P(\sigma_1 \oplus \sigma_0 \oplus \sigma_2)$ of \textsc{Select}, for a route \blue{made of} three concatenated subsequences $\sigma_1 \oplus \sigma_0 \oplus \sigma_2$ such that $\sigma_0$ contains a bounded number of customers can be evaluated in
$\Phi_{\textsc{C-3}} =  O(H^2 B).$
The same complexity is achieved for a concatenation of two sequences $\sigma_1$ and $\sigma_2$.
\end{proposition}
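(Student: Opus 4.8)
The plan is to leverage Proposition~\ref{concat-general} to reduce the task to a single quantity, and then replace its forward-labeling by a bi-directional scheme that exploits the monotone structure of non-dominated label sets. By Proposition~\ref{concat-general}, $P(\sigma_1 \oplus \sigma_0 \oplus \sigma_2) = \max\{\bar{P}(\sigma_1 \oplus \sigma_0 \oplus \sigma_2),\ \max_{i} P(\sigma_i)\}$, and the three values $P(\sigma_1)$, $P(\sigma_0)$, $P(\sigma_2)$ are already pre-processed, so they contribute to the maximum in $O(1)$. It therefore suffices to evaluate $\bar{P}$, the resource constrained shortest path in the reduced graph $\cG''$ that visits at least one starting or finish node in every subsequence it traverses, within $O(H^2 B)$.

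First I would characterize the structure of a crossing path. Because, in $\cG''$, the only arcs leaving a subsequence originate at its depot or finish nodes and the only arcs entering the next subsequence reach its starting nodes, any path that uses both $\sigma_1$ and $\sigma_2$ must leave $\sigma_1$ through a finish node $j$, cross the bounded middle $\sigma_0$, and enter $\sigma_2$ through a starting node $i$. The depot-to-$j$ portion inside $\sigma_1$ is described exactly by the pre-processed label set $S^{\textsc{beg}}_{j}(\sigma_1)$, and the $i$-to-depot portion inside $\sigma_2$ by $S^{\textsc{end}}_{i}(\sigma_2)$, each of size $O(B)$. Since $\sigma_0$ contains $O(1)$ customers, for every pair $(j,i)$ the traversal of $\sigma_0$ (including the empty traversal that skips $\sigma_0$, together with the two connecting arcs) yields only $O(1)$ non-dominated (resource, profit) increments, computable in $O(1)$.

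The crux is the join. For a fixed pair $(j,i)$ and a fixed middle increment, the best crossing path maximizes $s_f^\textsc{p} + s_b^\textsc{p}$ over forward labels $s_f \in S^{\textsc{beg}}_{j}(\sigma_1)$ and backward labels $s_b \in S^{\textsc{end}}_{i}(\sigma_2)$ whose combined resource consumption, augmented by the middle increment, does not exceed $R$. A naive pairing costs $O(B^2)$ and would only reproduce the general bound $O(H^2 B^2)$. Instead I would exploit the fact that a non-dominated label set is simultaneously sorted by increasing resource and increasing profit: sweeping the forward labels in decreasing resource order while advancing a single pointer over the backward labels (whose resource budget is thereby relaxed monotonically) and maintaining a running profit maximum evaluates the join in $O(B)$. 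This is the bi-directional dynamic programming step, and it is where the factor $B$ is saved.

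Summing over the $O(H)$ finish nodes of $\sigma_1$, the $O(H)$ starting nodes of $\sigma_2$, and the $O(1)$ middle increments gives $O(H^2 B)$. The remaining crossing patterns are of lower order: paths that return to the depot directly from $\sigma_0$, or that combine $\sigma_0$ with only one of $\sigma_1,\sigma_2$, involve a bounded tail and cost $O(HB)$ in total, while paths confined to a single subsequence are already captured by the pre-processed $P(\sigma_i)$. The two-sequence case $\sigma_1 \oplus \sigma_2$ is identical with an empty middle. I expect the main obstacle to be the rigorous justification of the $O(B)$ join: one must verify that decomposing an optimal crossing path at a finish/starting node is without loss of optimality --- which follows from the sparsified arc structure together with the triangle-inequality pruning underlying \textsc{Select} --- and that the two-pointer sweep is correct precisely because domination makes resource and profit co-monotone along each label set.
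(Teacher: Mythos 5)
Your proposal is correct and follows essentially the same route as the paper: both reduce the evaluation to joining the pre-processed forward labels $S^{\textsc{beg}}(\sigma_1)$ (carried across the bounded middle segment $\sigma_0$) with the backward labels $S^{\textsc{end}}(\sigma_2)$ over the $O(H^2)$ junction arcs, with each join done in $O(B)$ by the monotone two-pointer sweep that exploits the co-monotonicity of resource and profit in a non-dominated label set. The only difference is bookkeeping --- the paper propagates labels through $\sigma_0$ explicitly as a separate $O(HB)$ phase, whereas you fold $\sigma_0$ into $O(1)$ increments per junction pair --- which does not change the argument or the bound.
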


Indeed, the maximum of $ \{P(\sigma_1), P(\sigma_2) \}$ gives in $O(1)$ the \blue{best cost of a path that does not visit any starting or finish node in $\sigma_1$ or $\sigma_2$. Then, the best cost of a path visiting at least one starting or finish node can be evaluated as follows.}

\begin{enumerate}
\item We recall that the set of non-dominated labels of a resource constrained shortest path going from the starting depot to a \blue{finish node $k$ of the first sequence $\sigma_1$} is given by $S^{\textsc{beg}}_{k}(\sigma_1)$.

\item The algorithm propagates these labels iteratively to the next $|\sigma_0|$ nodes, obtaining the set of labels $\bar{S}_i(\sigma_{1,0})$ associated with the shortest path between the depot and the finish nodes of $\sigma_{1,0} = \sigma_1 \oplus \sigma_0$, for $i \in \mathcal{I} = \{ \max(|\sigma_1| + |\sigma_0| - H + 1,1), \dots, |\sigma_1| + |\sigma_0|\}$. All arcs in $|\sigma_0|$ are simple arcs, such that each propagation is done in $O(HB)$. The propagation to the finish depot for each $i$ is also evaluated, leading in addition to the optimal cost $\bar{P}(\sigma_{1,0})$ of a path servicing at least one node in both $\sigma_1$ and $\sigma_0$ and returning to the end depot without visiting $\sigma_2$.

\item Finally, the best cost $\bar{P}(\sigma_{1,0}  \oplus \sigma_2)$  of a path servicing nodes in both $\sigma_{1,0}$ and $\sigma_2$ is equivalent to finding, for each possible arc $(i,j) \in \cA''$ between $\sigma_{1,0}$ and $\sigma_2$, the best pair of labels $(s_k^\textsc{r},s_k^\textsc{p}) \in \bar{S}_i(\sigma_{1,0})$ and $(s_l^\textsc{r},s_l^\textsc{p}) \in S^{\textsc{end}}_{j}(\sigma_2)$ which maximizes the total profit while respecting resource constraints as expressed in Equation~(\ref{junction}).
\begin{equation} 
\label{junction}
\bar{P}(\sigma_{1,0}  \oplus \sigma_2) = 
\max_{i \in \mathcal{I} }
\max_{j \in \{1,\dots,\min(H,|\sigma_2|)\}}
 \left\{
\begin{aligned}
 \max_{k, l} \ &  \ s_k^\textsc{p} + p_{\sigma_{1,0}(i)\sigma_2(j)} + s_l^\textsc{p} \\
\text{s.t.}  \ & \ s_k^\textsc{r} + r_{\sigma_{1,0}(i)\sigma_2(j)} + s_l^\textsc{r}  \leq R \\
& (s_k^\textsc{r},s_k^\textsc{p}) \in \bar{S}_i(\sigma_{1,0}) \\
& (s_l^\textsc{r},s_l^\textsc{p}) \in S^{\textsc{end}}_{j}(\sigma_2)
\end{aligned} \right\}
\end{equation}

For any $(i,j)$, the maximum cost \blue{of} Equation (\ref{junction}) can be computed in $O(B)$ by sweeping the labels of $\bar{S}_i(\sigma_{1,0})$ by increasing resource consumption and, in the meantime, the labels of $S^{\textsc{end}}_{j}(\sigma_2)$ by decreasing resource consumption.

The computational complexity of the method is $O(|\sigma_0| HB) = O(HB)$ for Phase 2 and $O(H^2 B)$ for Phase 3, leading to an overall complexity of $O(H^2 B)$.
\end{enumerate}

The final cost returned by the algorithm is 
\begin{equation}
P(\sigma_1 \oplus \sigma_0  \oplus \sigma_2) = \max \{ \bar{P}(\sigma_{1,0}  \oplus \sigma_2), \bar{P}(\sigma_{1,0}), P(\sigma_1), P(\sigma_2) \}.
\end{equation}

For the case involving only two sequences $\sigma_1$ and $\sigma_2$, the same algorithm without Phase~2 yields the same complexity. \\

\blue{Classic inter-route VRP neighborhoods contain $\Theta(n^2)$ moves, for a total evaluation complexity of $\Theta(n^2 H^2 B)$ (c.f. Proposition \ref{prop-concat3}).} The complexity of the preprocessing phase is $O(n^2 H B)$. Hence, move evaluation is dominating in terms of CPU time. This statement is also supported by our empirical analyses, which show that the 
additional preprocessing effort did not have a significant impact on CPU time. 

Finally, intra-route moves are less numerous, in practice there are an average of $\Theta(\frac{n^2}{m})$ such moves to consider where $m$ is number of routes.
The neighborhood evaluation complexity (c.f. Proposition \ref{prop-concatN}) becomes $\Theta(\frac{n^2 }{m} H^2 B^2) = \Theta(\frac{B}{m} n^2  H^2 B)$.
The ratio $\frac{B}{m}$ is usually small and the observed effort spent on intra- and inter-route neighborhoods is of similar magnitude in our experiments.

%
%

\section{Experimental Analyses}

The contribution of this new large neighborhood is assessed by extensive computational experiments with three families of heuristics. The impact of key parameters is investigated on the TOP,  and the performance of each method is evaluated on the TOP, as well as on the CPTP and VRPPFCC relatively to the current state-of-the-art \blue{solution methods}.

\subsection{Heuristics and metaheuristic frameworks}
\label{exp-description-meta}

We selected three particular methods for our experiments: a multi-start local-improvement procedure (MS-LS), which represents one of the most simple scheme in heuristic search but is still at the core of most metaheuristics; a classic neighborhood-centered method such as the ILS of  \cite{Prins2009c}, and finally a more advanced population-based method with diversity management such as UHGS \citep{Vidal2012b}. This way, the contribution of the proposed neighborhoods can be investigated on three notable VRP heuristic frameworks.

\begin{itemize}
\item MS-LS is a straightforward application of the proposed neighborhood search, based on the classic neighborhoods \textsc{2-opt}, \textsc{2-opt*}, \textsc{Relocate}, \textsc{Swap} and \textsc{Cross} exchanges of up to two customers. As described previously, moves are applied between close customers on the exhaustive solution representation. A random order is used for move evaluation, any improving move being directly applied until no improvement can be found in the whole neighborhood. At the end, the resulting solution is a local optimum of the proposed neighborhood. The local-improvement procedure is repeated $\mu$ times from randomly generated initial solutions. The best local optimum constitutes the MS-LS solution.

\item MS-ILS is a direct adaptation of the method of \cite{Prins2009c}. Starting from a random initial solution, $n_C$ child solutions are iteratively generated by applying a shaking operator and the same local search as MS-LS. The best child solution is taken as new incumbent solution for the next iteration. The method is started $n_P$ times. Each run is completed once $n_I$ consecutive iterations have been performed without improvement of the best solution, or when an overall maximum time $T_{max}$ is attained. The overall best solution is finally returned. 

\item Finally, our implementation of UHGS derives directly from the general framework of \cite{Vidal2012b}. The 2-VRPP has been addressed only by adding a new route-evaluation operator, and all other procedures, selection, crossover, education, \emph{Split} algorithm and population-diversity management procedures remain the same. As previously, the population is managed to contain between $\mu^{\textsc{MIN}}$ and $\mu^{\textsc{MIN}}\nobreak+\nobreak\mu^{\textsc{GEN}}$ solutions, and the method terminates whenever $It_{max}$ iterations -- individual generations -- without improvement have been performed or when a CPU time $T_{max}$ is attained. No infeasible solutions need to be used in this context since the evaluation procedure allows for any route size without infeasibility.
\end{itemize}

\subsection{Benchmark Instances}
\label{exp-benchmark}

The performance of these three heuristics built on our new compound neighborhoods is assessed by means of extensive experiments on classic benchmark instances for the considered problems.
The classic TOP instances \citep{Chao1996} are classified into seven sets which include respectively 32, 21, 33, 100, 66, 64 and 102 customers. Each instance set is declined into individual instances with between 2 to 4 vehicles and different duration limits. We consider Sets 4-7, since optimal solutions are systematically obtained on the smaller problems. Other particular instances for which all methods from the recent literature find the optimal solutions have been also excluded from the experiments, only keeping the 157 most difficult ones as in \cite{Souffriau2010}.

The CPTP instance sets of \cite{Archetti2008d} have been derived from the classic VRP instances of \cite{Christofides1979} using different values of capacity $Q$ and fleet size $m$.  We refer to each instance as ``pXX-$m$-$Q$'', where XX is the index of the associated VRP instance.
Finally, we rely for the VRPPFCC on the two sets of instances from \cite{Bolduc2008}: \blue{Set ``CE'' is derived from the instances of \cite{Christofides1979} and includes up to 199 customers, while Set ``G'' includes larger instances with up to 483 customers, derived from \cite{Golden1998}}. Our three methods are compared to the best current metaheuristics in the literature, listed in Table~\ref{BestMethod}.

\begin{table}[htb]
\caption{Methods, references and acronyms}
\label{BestMethod}
\hspace*{-1.2cm}
\scalebox{0.80}
{
\begin{tabular}{|r|rll|}
\hline
\textbf{Problem}&\textbf{Acronym}&\textbf{Method}&\textbf{Authors} \\
\hline
\textbf{TOP}&TMH&Tabu Search&\cite{Tang2005} \\
&GTP&Tabu Search with Penalization Strategy&\cite{Archetti2006a} \\
&GTF&Tabu Search with Feasible Strategy&\cite{Archetti2006a}\\
&FVF&Fast Variable Neighborhood Search&\cite{Archetti2006a}\\
&SVF&Slow Variable Neighborhood Search&\cite{Archetti2006a}\\
&GLS&Guided Local Search&\cite{Vansteenwegen2009c}\\
&ASe&Ant Colony Optimization -- Sequential&\cite{Ke2008}\\
&ADC&Ant Colony Optimization -- Deterministic Concurrent&\cite{Ke2008}\\
&ARC&Ant Colony Optimization -- Random Concurrent&\cite{Ke2008}\\
&ASi&And Colony Optimization -- Simultaneous&\cite{Ke2008}\\
&SVNS&Skewed Variable Neighborhood Search&\cite{Vansteenwegen2009b}\\
&FPR&Fast Path Relinking&\cite{Souffriau2010}\\
&SPR&Slow Path Relinking&\cite{Souffriau2010}\\
&SA&Simulated Annealing&\cite{Lin2013}\\
&MSA&Multi-Start Simulated Annealing&\cite{Lin2013}\\
\hline
\textbf{CPTP}&GTP&Tabu Search with Admissible Strategy&\cite{Archetti2008d}\\
&GTF&Tabu Search with Feasible Strategy&\cite{Archetti2008d}\\
&VNS&Variable Neighborhood Search&\cite{Archetti2008d}\\
\hline
\textbf{VRPPFCC}&SRI&Selection, Routing and Improvement&\cite{Bolduc2008}\\
&RIP&Randomized Construction-Improvement-Perturbation&\cite{Bolduc2008}\\
&TS&Tabu Search&\cite{Cote2009a}\\
&TS2&Tabu Search with Ejection Chains&\cite{Potvin2011}\\
&TS+&Tabu Search with Ejection Chains&\cite{Potvin2011}\\
&AVNS& Adaptive Variable Neighborhood Search &\cite{Stenger2012a}\\
\hline
\end{tabular}
}
\end{table}

\subsection{Parameter setting and sensitivity analyses}

\textbf{General parameter setting.}
The proposed methodology relies on two parameters: the weight $\omega$ of the secondary objective and the sparsification coefficient $H$. The only purpose of $\omega$ is to establish a hierarchy between the real problem objective and the auxiliary route length objective (Section \ref{aux-obj}). From our experiments, setting $\omega = 10^{-4}$ establishes the desired hierarchy without involving numerical precision issues. The impact of the sparsification coefficient $H$ is discussed later in this section.

For the other parameters, we aimed to implement the methods without significant changes from the original papers \citep{Prins2009c,Vidal2012b}. Still, to compare with other authors with similar computational effort, we had to scale the termination criteria and population size, leading to the setting $\mu = 5$ for MS-LS, $(n_{\textsc{P}},n_{\textsc{I}},n_{\textsc{C}}) = (3,10,3)$ for MS-ILS, and $(\mu^{\textsc{MIN}},\mu^{\textsc{GEN}},It_{max}) = (5,10,500)$ for UHGS. The overall CPU time is also limited to $T_{max} = 300s$.\\


\noindent
\textbf{Sensitivity analysis on the sparsification parameter.}
Parameter $H$ is an important element of the newly proposed neighborhoods.  Larger values of this parameter lead to larger neighborhoods but higher CPU time. A good balance is thus desired. To calibrate $H$ and analyze its impact on the method, we ran several tests on the UHGS with values of $H \in \{1,2,3,5,7,10, \infty\}$. Each parameter configuration was tested on the 157 TOP benchmark instances and 10 runs have been conducted. The solution quality  is measured for each instance as a percentage of deviation from the Best Known Solution (BKS) in the literature, $100(z_\textsc{bks}-z)/z_\textsc{bks}$, where $z$ is the profit obtained by the method and $z_\textsc{bks}$ is the profit of the BKS. The best known solutions, including those found in this paper, are included in Table \ref{results-TOP1}.

The results of this calibration, averaged on all instances, are displayed in Table \ref{calib-TOP}. We report for each configuration the average deviation to BKS on 10 runs, the deviation of the best solution of these runs, the number of BKS found out of 157, the average CPU time, the average CPU time to reach the best solution of the run, and the number of labels per customer node in the shortest path subproblems.  Finally, the last line reports the p-values of a Wilcoxon signed-rank test for paired samples, between the average profit obtained with the reference parameter setting $H=3$ and any other setting $x$. The p-value illustrates how likely is the null hypothesis ``that both methods for $H=3$ and $H = x$ perform equally''.

\begin{table}[htb]
\caption{Calibration of the sparsification parameter $H$}
\label{calib-TOP}
\hspace*{-0.3cm}
\scalebox{0.88}
{
\begin{tabular}{|r|p{1.5cm}p{1.5cm}p{1.5cm}p{1.5cm}p{1.5cm}p{1.5cm}p{1.5cm}p{1.5cm}|}
\hline
&H=1&H=2&H=3&H=4&H=5&H=7&H=10&H=$\infty$\\
\hline
Avg Gap&0.090\%&0.022\%&0.020\%&0.035\%&0.036\%&0.055\%&0.087\%&0.177\%\\
Best Gap&0.008\%&0.002\%&0.001\%&0.002\%&0.005\%&0.005\%&0.010\%&0.016\%\\
Nb BKS&144&155&155&153&152&152&148&143\\
Avg Time(s)&136.83&174.32&192.00&211.23&223.54&236.48&247.87&282.81\\
Avg T-Best(s)&84.89&92.54&91.09&103.10&111.19&118.82&130.40&156.86\\
Avg Labels&8.39&29.19&34.70&36.63&36.90&36.50&35.27&32.81\\
&\multicolumn{8}{c|}{\strut}\\
P-value & 2.4E-12 & 0.23 & -- & 1.5E-4 & 2.1E-5 & 5.5E-7  & 2.2E-8 &  3.0 E-11 \\
\hline
\end{tabular}
}
\end{table}

From these tests, it appears that the configuration $H=3$ produces solutions of significantly better quality than any configuration with $H \in \{1, 4, 5, 7, 10, +\infty\}$. This configuration attains 155/157 best known solutions. It requires $40.3\%$ more CPU time than with $H=1$, but only $7.3\%$ more CPU time to reach the best solution. The average number of non-dominated labels kept per node is greater by a factor $4.4$ than the case with $H=1$. It should be noted that $H=1$ is not equivalent to ``no selection'', since any arc from and to the depot is still considered, and thus the shortest path can select any subset of consecutive customers. Thus the number of labels for $H=1$ is greater than one. Using the selection abilities of the algorithm, and thus values of $H$ greater than one, contributes to find solutions of higher quality in similar CPU time.

\paragraph{Sensitivity analysis on the auxiliary objective.} We also tested the impact of the auxiliary objective by comparing a version in which this objective is not considered, i.e. $\omega = 0$ to our base version. In these tests, and in the remainder of this paper, the value $H=3$ is used for the sparsification parameter. The results are presented in Table \ref{calib-obj} in the same format as previously.

\begin{table}[htb]
\caption{Calibration of the auxiliary objective}
\label{calib-obj}
\centering
\scalebox{0.88}
{
\begin{tabular}{|r|ll|}
\hline
&Auxiliary & Only Primary \\
&$\omega = 10^{-4}$ & $\omega = 0$ \\
\hline
Avg Gap&0.020\%&0.102\%\\
Best Gap&0.001\%&0.006\%\\
Nb BKS&155&150\\
Avg Time(s)&192.00&154.38\\
Avg T-Best(s)&91.09&56.43\\
Avg Labels&34.70&30.74\\
\hline
&\multicolumn{2}{c|}{\strut}\\
P-value & -- &  $5.1 \cdot 10^{-13}$  \\
\hline
\end{tabular}
}
\end{table}

Our results indicate a significant improvement, with a very small p-value of $5.1 \cdot 10^{-13}$, of the solution quality related to the use of the auxiliary objective during optimization. The auxiliary objective leads to deeper local optimums, and thus potentially longer descents. This is illustrated by an increase of $24.4\%$ of the average CPU time, and $61.4\%$ of the average time to find the best solution. Yet, as a consequence the solution quality is also much higher. In particular, the average gap to BKS is reduced from $0.102\%$ to $0.020\%$ when the auxiliary objective is used. Even by using longer runs or multiple runs without this objective, it would take much longer CPU time to reach such solution quality.

\subsection{Results on TOP}
\label{exp-TOP}

In this section, we compare the results generated by our methods to those generated by the previous state-of-the-art algorithms for the TOP, listed in Table \ref{BestMethod}. For this problem, several authors have reported results with very small CPU time, and thus to provide further elements of comparison we also created a \emph{fast version} of UHGS, named UHGS-f, by reducing the termination criteria to $(It_{max},  T_{max}) = (250, 60s)$. 

Tables \ref{results-TOP1}-\ref{results-TOP3} (in Appendix) provide detailed results on the 157 selected instances, in comparison to the best previous methods, as well as the new BKS, merging previously-known solutions and a few improved ones found by the proposed methods. When a method finds the BKS, this solution is highlighted in boldface. When a new improved BKS is found, this solution is also underlined. 

Table \ref{summary-TOP} provides a summary of these results. It displays for each method the average deviation to BKS on 10 runs, the average CPU time T(s) and average time T*(s) to reach the final solution  of a run, the deviation to BKS of the best solution out of 10 runs, the associated total CPU time, and the processor used in the tests. A last line reports the p-value of a Wilcoxon test on paired samples between the best solutions of UHGS and other methods. Detailed CPU times, on each instance, are available from the authors. Most of the previous authors only report their best solution out of ten runs. Thus the CPU time of these solutions, as well as the CPU time of our best solution out of ten runs, is scaled by a factor of ten.

\begin{table}[htbp]
\caption{Summary of results on the TOP}
\label{summary-TOP}
\hspace*{-1.55cm}
\scalebox{0.82}
{
\begin{tabular}{|r@{\hspace*{0.25cm}}l|cccccc|c|c|c|c|}
\hline
&& \textbf{FVF}& \textbf{SVF} & \textbf{ASe}& \textbf{FPR}& \textbf{SPR}& \textbf{MSA}& \textbf{UHGS}& \textbf{UHGS-f}& \textbf{MS-ILS}& \textbf{MS-LS}\\
\hline
 \textbf{Avg} &\textbf{Gap}&--&--&1.094\%&--&--&--& 0.020\%&0.100\%&0.115\%&0.861\%\\
 &\textbf{T(s)}&18.9&458&25.2&5.00&21.2&--&192&69.0&156&9.29\\
 &\textbf{T*(s)}&--&--&--&--&--&36.3&91.1&42.1&104&5.24\\
 \textbf{Best} & \textbf{Gap}&0.189\%&0.039\%&0.088\%&0.401\%&0.052\%&0.028\%&0.001\%&0.006\%&0.003\%&0.109\%\\
&\textbf{T(s)}&56.7&1370&252&50.0&212&--&1920&690&1560&92.9\\
 &\textbf{Nb$_\textsc{BKS}$}&94&134&128&78&126&133&155&150&153&115\\
 &\textbf{CPU}&P4 2.8G&P4 2.8G&PC 3G&Xe 2.5 G&Xe 2.5G&C2 2.5G&Xe 3.07G&Xe 3.07G&Xe 3.07G&Xe 3.07G\\
& \textbf{P-val}&5.2E-12&2.0E-05&3.9E-06&1.1E-14&1.2E-06&1.8E-05&***&4.3E-02&0.36&1.6E-08\\
\hline
\end{tabular}
}
\end{table}

The TOP results from methods in Table \ref{BestMethod} are displayed in Figure \ref{two-dim-TOP} on a two-dimensional plot considering CPU time and Gap(\%), \blue{using a logarithmic scale.} The DEC AlphaServer 1200/533 and DEC Alpha XP 1000 CPUs of \cite{Tang2005} are approximately $5\times$ to $10 \times$ slower than a Pentium IV 2.8 GHz according to Dongarra factors. The associated CPU time has thus been scaled accordingly.  Due to a lack of accurate data, the speed of the other processors can not be scaled in a reliable manner. These processors are of similar generation and should not differ by a speed factor of more than $2 \times$. Figure \ref{two-dim-TOP} thus displays a reasonable approximation of computation effort. Finally, it should be noted that \cite{Lin2013} reports the time to reach the best solution, which is smaller than the total computational time, and thus receives a significant advantage. \\

\blue{Overall, the proposed UHGS with large neighborhood search produces solutions of high quality, reaching a small gap of $0.001\%$. This is a significant improvement over previous methods, as  confirmed by the paired-samples Wilcoxon tests, with p-values smaller than $2 \cdot 10^{-5}$.
MS-ILS and MS-LS also obtain good solutions: the multi-start local improvement algorithm produces results of a quality comparable to current metaheuristics while being conceptually much simpler.
A total of 155/157 best known solutions have been found by UHGS, as well as three new BKS (1267 for p4.2.q, 1292 for p4.2.r, and 1120 for p7.3.t).}

\blue{The new methods with large neighborhoods, on the other hand, require a higher CPU time. For example, UHGS uses an average search time of 192 seconds, 91.1 seconds to find the final solution of the run, compared to 36.0 seconds to find the final solution for the previous best method, MSA. The CPU time of UHGS-f -- 42.1 seconds to find the solution -- is much closer to MSA. This increased time is due to the new neighborhoods, which require additional evaluation effort. To speed up the search, promising research avenues can consider the hybridization of the combined local search with more classic approaches, by applying CLS only on selected elite solutions.}

\begin{figure}[htbp]
\centering
\includegraphics[width=11cm]{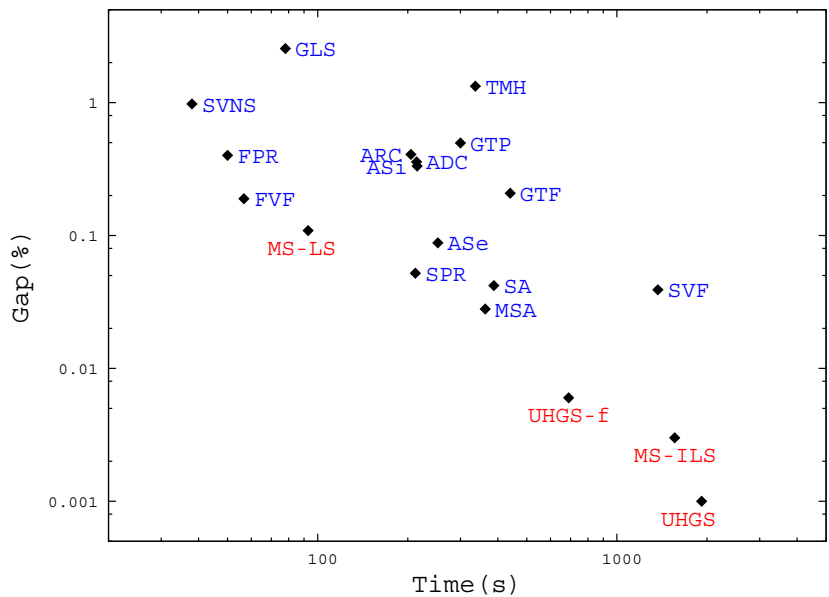}
\caption{Trade-off between solution quality and CPU time -- TOP benchmark instances}
\label{two-dim-TOP}
\end{figure}

\blue{Figure \ref{two-dim-TOP} highlights a wide range of alternative trade-offs between computational time and solution quality. Nine methods are in the pareto front (SVNS, FPR, FVF, SPR, MSA along with the four proposed methods) and thus, depending on the application context and the desired solution quality, a wide panel of methods with different levels of sophistication is available.}

\begin{figure}[!htbp]
\centering
\includegraphics[width=16cm]{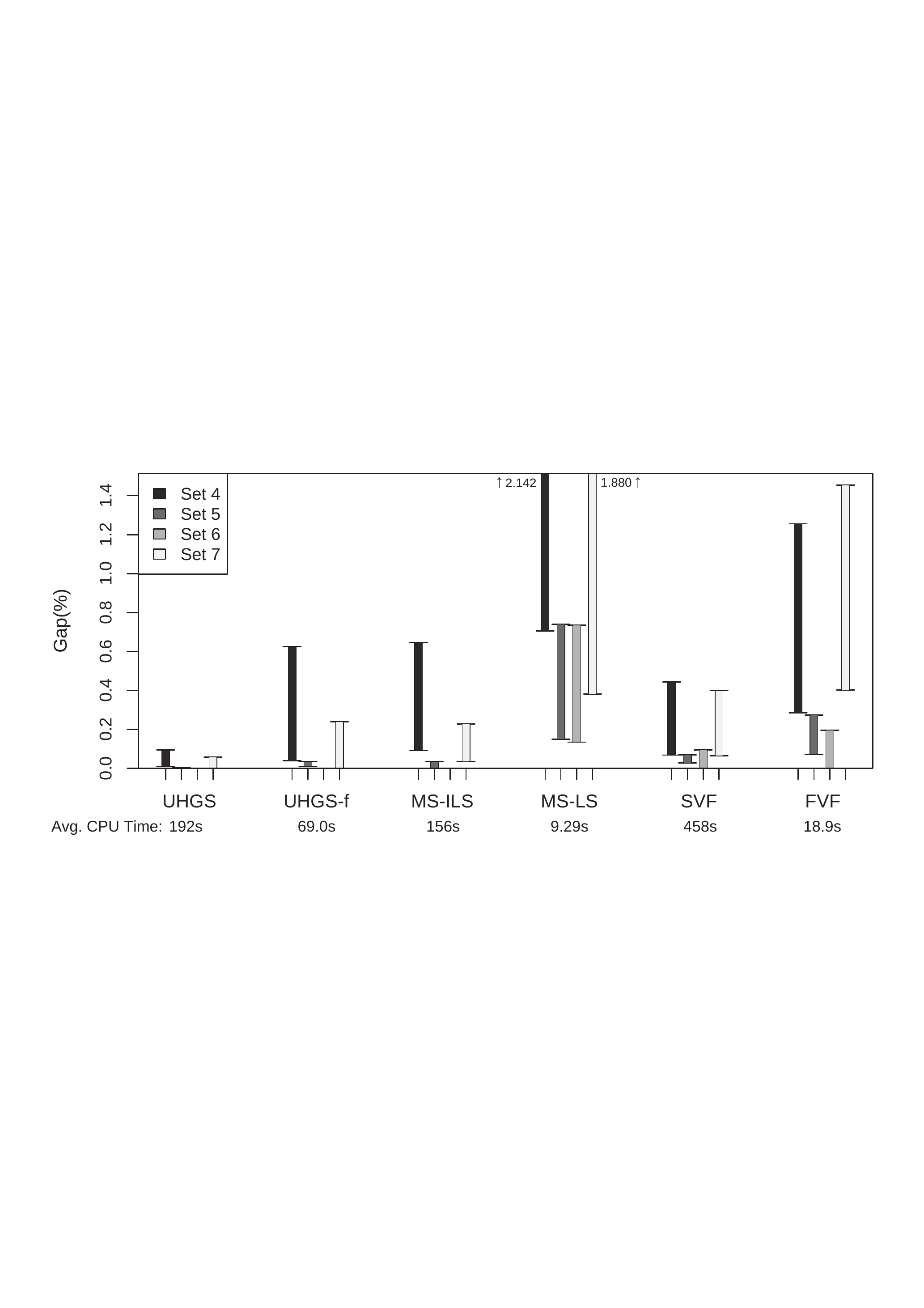}
\caption{\blue{Range of solutions for each method and each set of instances. The top of each interval represents the worst solution of three random runs, while the bottom represents the best solution out of three runs.}}
\label{two-dim-TOP}
\end{figure}

\blue{Finally, in a similar fashion as \cite{Archetti2006a}, Figure \ref{two-dim-TOP} displays an analysis of the worst and best solutions on three random runs. The data associated with this figure is provided in Table \ref{best-worst}, in appendix. The figure analyzes the robustness of methods, i.e., by how much solution quality may vary from one run to another. It appears that UHGS with large neighborhoods produces stable results (in 192 seconds per run) of overall higher quality than SVF (in 458 seconds per run) and other methods. A trade-off between computational effort and robustness is clearly observable. FVF appears is more stable than MS-LS, and requires less CPU time. Thus, the use of metaheuristic strategies still remains a good asset to achieve more~stable~results.}

\subsection{Results on other problems}

Further experiments have been conducted on the VRPPFCC and CPTP.
A summary of the results is displayed in Tables \ref{summary-VRPPFCC}-\ref{summary-CPTP} and Figure \ref{two-dim-others}, using the same conventions as previously. Average results, representative of a single run, are located with a triangle on this figure, while best solutions on ten runs are located with a diamond. The p-values in tables refer to statistical tests on the significance of the performance difference between the best solutions of UHGS and those of other methods. Detailed results per instance are provided in appendix.

\begin{table}[htb]
\caption{Summary of results on the VRPPFCC}
\label{summary-VRPPFCC}
\begin{minipage}{15.8cm}
\scalebox{0.8}
{
\begin{tabular}{|r@{\hspace*{0.25cm}}l|c@{\hspace*{0.3cm}}c@{\hspace*{0.3cm}}c@{\hspace*{0.3cm}}c@{\hspace*{0.3cm}}c|c|c|c|c|}
\hline
&&\textbf{RIP}&\textbf{TS}&\textbf{TS 2}&\textbf{TS+}&\textbf{AVNS}&\textbf{UHGS}&\textbf{MS-ILS}&\textbf{MS-LS}\\
\hline
\textbf{Avg}&\textbf{Gap CE}&1.104\%&0.408\%&--&--&--&0.153\%&0.331\%&2.862\%\\
&\textbf{Gap}&1.575\%&\textit{0.698\%}&--&--&--&0.272\%&0.682\%&3.155\%\\
&\textbf{T(min)}&17.45&5.77&14.21&34.86&11.94&26.39&16.61&1.89\\
\textbf{Best}&\textbf{Gap CE}&--&0.165\%&0.432\%&0.319\%&0.202\%&0.016\%&0.070\%&1.505\%\\
&\textbf{Gap}&--&\textit{0.405\%}&\textit{0.934\%}&\textit{0.228\%}& 0.422\%&0.038\%&0.270\%&2.100\%\\
&\textbf{T(min)}&--&57.74&142.11&348.59&119.44&263.95&166.15&18.88\\
&\textbf{Nb$_\textsc{BKS}$}&0&--&--&--&3&25&11&1\\
&\textbf{CPU}&Xe 3.6G&Opt 2.2G&Opt 2.2G&Opt 2.2G&I5 2.67G&Xe 3.07G&Xe 3.07G&Xe 3.07G\\
&\textbf{P-val}&6.5E-07&4.3E-02&2.1E-03&4.2E-03&1.2E-05&--&4.9E-04&5.4E-07\\
\hline
\end{tabular}
}
\vspace*{0.1cm}

\begin{footnotesize}
$^1$ \cite{Potvin2011} performed experiments with the same TS algorithm of \cite{Cote2009a}, yet the reported solution quality (TS 2) is lower than in the original paper. We display both set of results, but note that there may be a high variance of solution quality. 

$^2$ \cite{PrivCommVRPPFCC2013} reported to us that on Set ``G'', truncated distances have been erroneously used for CP09 and PN11. The related solutions and gaps, reported in italics, can thus be considered only as lower bounds.

\end{footnotesize}
\end{minipage}
\end{table}

\begin{table}[htb]
\caption{Summary of results on the CPTP}
\label{summary-CPTP}
\centering
\scalebox{0.9}
{
\begin{tabular}{|r@{\hspace*{0.25cm}}l|ccc|c|c|c|}
\hline
&&\textbf{VNS}&\textbf{GTF}&\textbf{GTP}&\textbf{UHGS}&\textbf{MS-ILS}&\textbf{MS-LS}\\
\hline
\textbf{Avg}&\textbf{Gap}&0.371\%&0.952\%&0.929\%&0.030\%&0.173\%&0.557\%\\
&\textbf{T(min)}&10.30&2.82&8.54&3.53&3.43&0.18\\
\textbf{Best}&\textbf{Gap}&--&--&--&0.001\%&0.015\%&0.154\%\\
&\textbf{T(min)}&--&--&--&35.28&34.31&1.85\\
&\textbf{Nb$_\textsc{BKS}$}&99&71&67&130&125&29\\
&\textbf{CPU}&PIV 2.8G&PIV 2.8G&PIV 2.8G&Xe 3.07G&Xe 3.07G&Xe 3.07G\\
&\textbf{P-val}&8.4E-06&1.4E-13&1.3E-13&--&4.3E-02&8.6E-02\\
\hline
\end{tabular}
}
\end{table}

For both problems, it is noteworthy that UHGS with the large neighborhood produces solution of better quality than the previous best methods. For the CPTP, a smaller CPU time is also achieved in most cases.
For the VRPPFCC, a gap to the BKS of $0.038\%$ is obtained, compared to $0.228\%$ for TS+, which was, in addition, advantaged because of a distance rounding issue (c.f. note in Table~\ref{summary-VRPPFCC}). The average solutions on one run of UHGS, with a gap of $0.272\%$, are also of higher quality than the best solutions of AVNS on ten runs, with a gap of $0.422\%$, while requiring less overall CPU time. UHGS produces a large number of best known solutions (25/34), including 20 new ones.
For the CPTP, UHGS reaches a total of 130/130 best known solutions, including 29 new ones, and all optimal solutions known from \cite{Archetti2008d} and \cite{Archetti2013a} have been retrieved.
MS-LS, MS-ILS and UHGS, lead to different CPU time and solution quality trade-offs, the fastest computation time being achieved with the simple MS-LS.  

\begin{figure}[htbp]
\centering
\hspace*{-1.0cm}
\includegraphics[width=8.3cm]{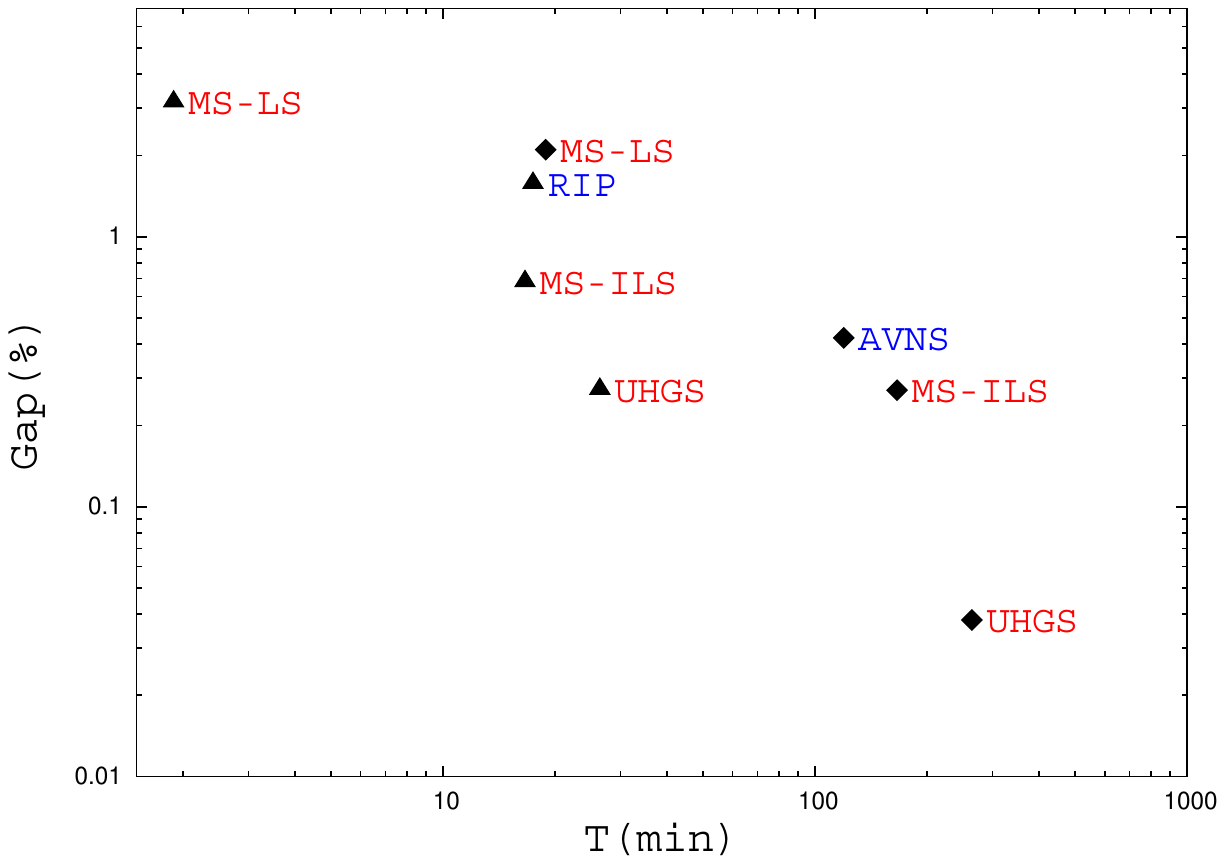}
\includegraphics[width=8.3cm]{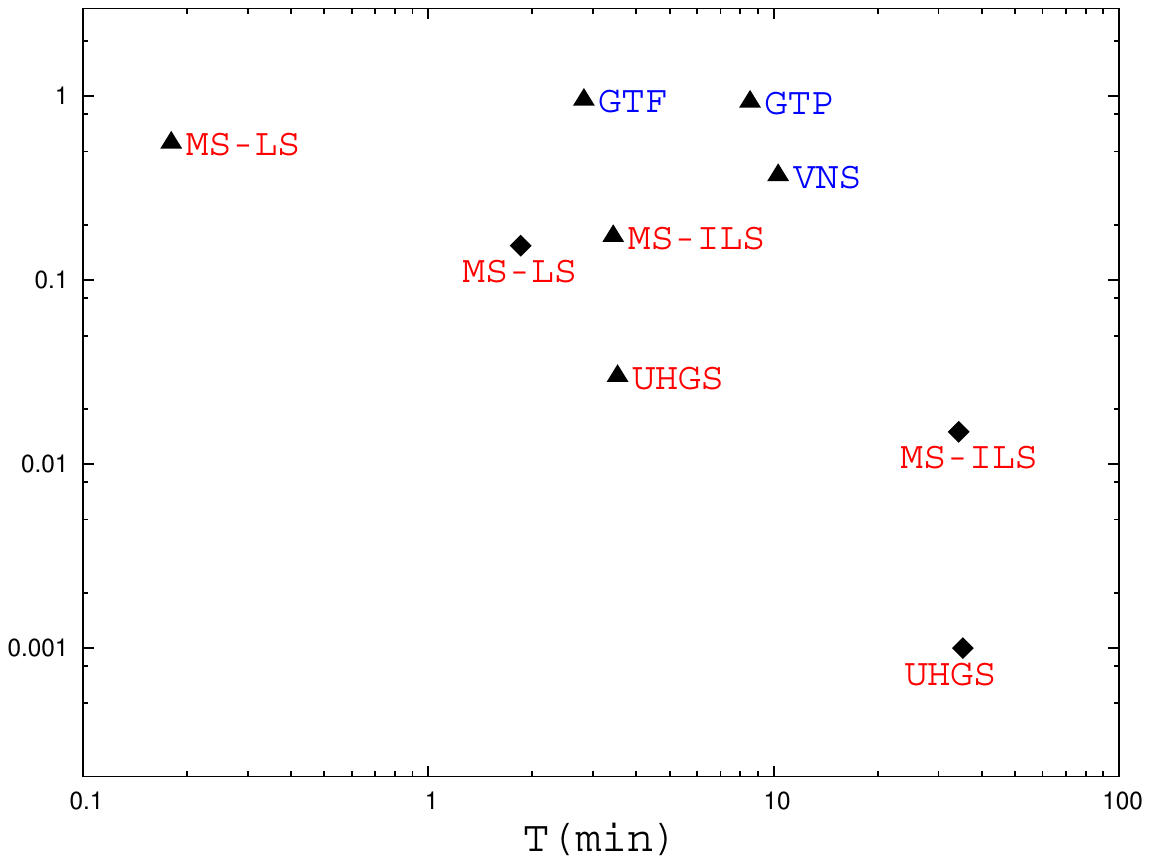}
\caption{Trade-off between solution quality and CPU time -- VRPPFCC (on the left) and CPTP (on the right). Comparison with state-of-the-art methods.}
\label{two-dim-others}
\end{figure}

On Figure \ref{two-dim-others}, the trade-off between solution quality and CPU time is visible for both problems. The proposed methods are dominating, since for any past method there exist at least one of our configurations which produces solutions of higher quality in less CPU time.
We also note that, while the stand-alone MS-LS turned out to be very efficient on the TOP, its performance on the VRPPFCC tends to be lower. This can be related to the fact that the objective of VRPPFCC or CPTP is based on both customer selection and routing. For the TOP, the selection alone is considered in the objective, and slightly sub-optimal routing decisions may only weakly impact the sets of feasible selections considered by the method. 

\section{Conclusion}

We have introduced a new large neighborhood for VRPs with profits.  These neighborhoods are searched by means of efficient pruning and bi-directional dynamic programming techniques. They have been tested in a local-improvement method, an iterated local search and a hybrid genetic search, on the TOP, the CPTP and the VRPPFCC. These new neighborhoods contribute to find solutions of higher quality in comparison to the previous state-of-the-art methods. 52 new best known solutions have been found. It is remarkable that the simple local-improvement method with these neighborhoods reaches solutions of similar quality than most current complex metaheuristics for the TOP.

The proposed method is a very novel way of designing neighborhood search on VRP with profits. It should be more successful on settings for which the ratio of customers to be delivered is high, such as the VRPPFCC where a small proportion of deliveries are usually assigned to a third party provider. Problems with a large number of delivery options, e.g. thousands of locations, and scarce resources, e.g. trucks to service only a few customers, may still not be well suited for such methodology, since all deliveries are considered in the routing algorithm. To efficiently handle  both cases, we suggest as a perspective of research to hybridize classic and new neighborhoods. The classic neighborhood can help filtering subsets of more promising deliveries, and generating elite initial solutions, which can then be improved by a few iterations of the large neighborhoods with subsets of potential customers. This would allow to harness the highest exploration capacities of our proposed neighborhood search while reducing even further the CPU time. Other perspective of research involve the extension of the proposed methodologies to other variants of VRPP, e.g., with time windows, variable profits, or arc routing, and even more general extensions of the concepts to other combinatorial optimization problems with decisions on task selections.


\begin{table}[!P]
\begin{footnotesize}
\vspace*{-2cm}
\renewcommand{\arraystretch}{0.95}
\caption{Results for the TOP, instances of \cite{Chao1996}}
\label{results-TOP1}
\hspace*{-0.7cm}
\scalebox{0.82}
{

}
\end{footnotesize}
\end{table}

\end{document}